\newcommand{\ignore}[1]{}
\newcommand{\eg}{e.g.~}
\newcommand{\MPMGAD}{MPMD-Set\xspace}
\newcommand{\SSCA}{Sensible-ALG\xspace}
\newcommand{\MPMGADMTS}{MPMD-Set-MTS\xspace}\newcommand{\MPMDSize}{MPMD-Size\xspace}
\newcommand{\ProcessCost}{processing cost\xspace}
\newcommand{\MPMDSizeMTS}{MPMD-Size-MTS\xspace}
\DeclareMathOperator{\timedist}{\mathit{d_{time}}} \DeclareMathOperator{\timedistc}{\mathit{d_{ctime}}} \DeclareMathOperator{\dist}{\mathit{d}} \DeclareMathOperator{\optcost}{optcost}
\DeclareMathOperator{\Sur}{sur}
\DeclareMathOperator{\pos}{pos}
\DeclareMathOperator{\atime}{atime}
\DeclareMathOperator{\cost}{cost}
\DeclareMathOperator{\reqGrowth}{reqGrowth}
\DeclareMathOperator{\symdif}{\triangle}
\DeclareMathOperator{\diff}{diff}
\hideLIPIcs \title{Online Matching with Set and Concave Delays} 
\author{Lindsey Deryckere}{School of Computer Science, The University of Sydney, Australia}{lindsey.deryckere@sydney.edu.au}{}{}
\author{Seeun William Umboh}{School of Computing and Information Systems, The University of Melbourne, Australia \and \url{http://williamumboh.com}}{william.umboh@unimelb.edu.au}{https://orcid.org/0000-0001-6984-4007}{}
\authorrunning{L. Deryckere and S.\.W. Umboh}
\keywords{online algorithms, matching, delay, non-clairvoyant}
\begin{document}
\maketitle
 \begin{abstract}
	We initiate the study of online problems with \emph{set delay}, where the delay cost at any given time is an arbitrary function of the set of pending requests. In particular, we study the online min-cost perfect matching with set delay (\MPMGAD) problem, which generalises the online min-cost perfect matching with delay (MPMD) problem introduced by Emek et al. (STOC 2016). In MPMD, $m$ requests arrive over time in a metric space of $n$ points. When a request arrives the algorithm must choose to either match or delay the request. The goal is to create a perfect matching of all requests while minimising the sum of distances between matched requests, and the total delay costs incurred by each of the requests. In contrast to previous work we study \MPMGAD in the \emph{non-clairvoyant} setting, where the algorithm does not know the future delay costs. We first show no algorithm is competitive in $n$ or $m$. We then study the natural special case of \emph{size-based} delay where the delay is a non-decreasing function of the number of unmatched requests. Our main result is the first non-clairvoyant algorithms for online min-cost perfect matching with size-based delay that are competitive in terms of $m$. In fact, these are the first non-clairvoyant algorithms for any variant of MPMD. A key technical ingredient is an analog of the symmetric difference of matchings that may be useful for other special classes of set delay. Furthermore, we prove a lower bound of $\Omega(n)$ for any deterministic algorithm and $\Omega(\log n)$ for any randomised algorithm. These lower bounds also hold for clairvoyant algorithms. Finally, we also give an $m$-competititve deterministic algorithm for uniform concave delays in the clairvoyant setting.
\end{abstract} \section{Introduction}

Studying online problems with delay is a line of work that has recently gained traction in online algorithms (\eg \cite{emekoriginal, bin-packingClusteringDelays, cachingDelay, setCoverDelayNon-Clairvoyance}). In such problems, request arrive over time requiring service. Delaying the service of a request accumulates a delay cost given by a delay function associated with the request. The total cost of a solution is the cost of servicing all requests plus the sum of all delay costs incurred by each request.

We initiate the study of online problems with \emph{set delay}. In this model, we generalize the notion of delay to one where the instantaneous delay cost at any point in time is determined by an arbitrary monotone non-decreasing function of the set of pending requests, rather than the sum of individual delay functions associated with each request. In particular, we study the online min-cost perfect matching with set delay (\MPMGAD) problem, which generalizes of the min-cost perfect matching with delays (MPMD) problem introduced by Emek et al.~\cite{emekoriginal}. 

In MPMD, $m$ requests arrive over time in a metric space of $n$ points. Upon arrival of a request the algorithm must choose to either match the request, incurring a cost equal to the distance between the two requests, or to delay the request, incurring a cost given by a delay function associated with the request. 
Prior results for MPMD have mostly focused on each request sharing the same delay function (in particular, linear, concave, and convex) and achieve competitive ratios that solely depend either on $n$ or $m$. Moreover, existing algorithms rely on clairvoyance, where the algorithm has full knowledge of future delay costs. Furthermore, existing randomised algorithms rely on metric embeddings which require knowledge of the metric space in advance.

In this paper, our main contribution is to study the more general \MPMGAD in the least restrictive setting where the algorithm does not know the metric space in advance and has no knowledge of future delay costs. 
We begin by showing that, in contrast to prior results, the \MPMGAD problem does not admit a deterministic competitive ratio that solely depends on $n$ or $m$.
\begin{theorem}
	\label{thm:lb_general}
	Every deterministic algorithm for \MPMGAD has competitive ratio $\Omega(\Phi)$, where $\Phi$ is the aspect ratio of the metric space.
\end{theorem}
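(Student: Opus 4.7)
The plan is to construct, for any target aspect ratio $\Phi \geq 2$, a four-point metric with aspect ratio exactly $\Phi$, together with an adversarial request sequence and set delay function $f$, witnessing the lower bound against any deterministic algorithm.

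I would take two clusters $\{a_1, a_2\}$ and $\{b_1, b_2\}$ with intra-cluster distance $1$ and inter-cluster distance $\Phi$, then release $r_1$ at $a_1$ and $r_2$ at $b_1$ at time $0$, so the only available pair has matching cost $\Phi$. The adversary's strategy is two-branched, parameterised by the deterministic algorithm's commit time $t^*$ for matching $r_1$-$r_2$. If the algorithm commits early, the adversary releases $r_3$ at $a_2$ and $r_4$ at $b_2$ just after $t^*$, locking the algorithm into a second cross-cluster match for total cost $\approx 2\Phi$, whereas $\mathrm{OPT}$, knowing the whole sequence, waits and matches $r_1$-$r_3$ and $r_2$-$r_4$ for cost $2$ plus negligible delay. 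If the algorithm delays, the adversary uses the arbitrariness of $f$ to set a large delay rate on $\{r_1, r_2\}$, so the algorithm accumulates $\Omega(\Phi)$ delay before matching, whereas $\mathrm{OPT}$ matches $r_1$-$r_2$ at time $0$ for cost $\Phi$ and pays no delay. In either branch, $\mathrm{ALG}/\mathrm{OPT} = \Omega(\Phi)$.

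The main obstacle is that a naive single-scale two-stage adversary using a standard per-request delay only yields a constant-factor lower bound, because the algorithm can balance its commit time in ski-rental style so that both branches give only a constant ratio. Achieving $\Omega(\Phi)$ therefore requires genuinely using the set-based nature of $f$: since $f$ is an arbitrary monotone function of the pending set, the adversary can spike its value on precisely the pending sets the deterministic algorithm is forced into, while leaving $\mathrm{OPT}$'s strategically different pending sets in a cheap region. The hardest step will be designing $f$ and the release schedule so that the two branches simultaneously yield an $\Omega(\Phi)$ ratio, rather than collapsing to a constant through the algorithm's balancing; I expect this to hinge on exploiting the asymmetry between $\mathrm{ALG}$'s and $\mathrm{OPT}$'s pending sets that set-based delay allows but per-request delay forbids.
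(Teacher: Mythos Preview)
Your proposal has a genuine gap. With only two requests $r_1, r_2$ present at time $0$, the algorithm has a single matching option, so there is no ``wrong choice'' for the adversary to exploit; the two-branch (commit-early vs.\ delay) argument then collapses to a ski-rental trade-off and yields only a constant ratio, as you yourself note. Your proposed fix---spiking $f$ on $\mathrm{ALG}$'s pending sets while keeping $\mathrm{OPT}$'s cheap---is blocked by the monotonicity requirement on set delay functions. After $\mathrm{ALG}$ commits and $r_3, r_4$ arrive, $\mathrm{ALG}$'s pending set $\{r_3, r_4\}$ is a \emph{subset} of $\mathrm{OPT}$'s pending set $\{r_1, r_2, r_3, r_4\}$ (before $\mathrm{OPT}$ matches), so $f_t(\{r_3,r_4\}) \le f_t(\{r_1,r_2,r_3,r_4\})$ for every $t$. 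In every phase of your construction, $\mathrm{ALG}$'s pending set is either identical to $\mathrm{OPT}$'s or a subset of it, so set-based delay can never charge $\mathrm{ALG}$ strictly more than $\mathrm{OPT}$. Consequently, an algorithm that simply waits whenever the revealed delay is zero cannot be punished: when the adversary is eventually forced (by the requirement $\sum_t f_t(U) = \infty$) to raise the delay, the algorithm matches all currently present requests optimally and ties $\mathrm{OPT}$.

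The paper sidesteps this entirely with a different construction. It uses a $3{+}1$ asymmetric metric (three points pairwise at distance $\epsilon$, one point at distance $D$) and releases \emph{four} requests at time $0$, one per point. A deadline on $r_1$ forces an immediate match while there are \emph{multiple} choices; whichever close partner $\mathrm{ALG}$ picks, the adversary next puts a deadline on the remaining close request, forcing a cost-$D$ match to the far point. Two further requests then arrive at the locations of that second match, costing $\mathrm{ALG}$ another $D$ but costing $\mathrm{OPT}$ nothing (it matched $r_1$ to the close request that \emph{would} later get a deadline, saving the other for the co-located late arrival). The key idea missing from your sketch is that the lower bound comes not from timing (early vs.\ late) but from forcing a commitment among several simultaneous options and then adaptively punishing whichever one is chosen---and this requires more than two requests to be present when the first deadline fires. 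Your symmetric $2{+}2$ cluster metric also does not support this argument: with one request per point, the obvious intra-cluster match is always correct, so no wrong first choice can be induced.
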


Our lower bound holds even for simple instances where $n$ and $m$ are constants.
Thus, we restrict our attention to designing a competitive solution for the \MPMGAD problem where the instantaneous delay cost at any point in time is a monotone non-decreasing function of the number of unmatched requests at that time. We call such a delay cost function \emph{size-based} (See Section~\ref{sec:preliminaries} for a formal definition). \MPMGAD with size-based delay (\MPMDSize) has natural applications in practical settings with service-level agreements such as cloud computing.\footnote{In these settings, the service level agreement requires the cloud provider to provide a certain level of service and the provider incurs penalties if the level is not met.}

Our main result is the first competitive algorithms for \MPMDSize, where the competitive ratio is a function of the number of requests. At the core of our result is a reduction from \MPMDSize to the well-known Metrical Task System (MTS) problem (defined in Section~\ref{ssec:techniques}).
\begin{theorem}
	\label{thm:size-based-to-match}
	For any $f(N)$-competitive algorithm for MTS with $N$ states, there is an $f(2^m)$-competitive algorithm for \MPMDSize.
\end{theorem}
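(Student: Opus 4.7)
The plan is to cast MPMD-Size as an instance of Metrical Task System with exactly $N = 2^m$ states, one per subset $S \subseteq [m]$ of the $m$ requests, and then invoke any $f(N)$-competitive MTS algorithm as a black box. Intuitively, state $S$ will encode ``the set of requests that are currently matched,'' so that an MTS trajectory over the $2^m$ states corresponds to a scheduling of when matches are made in MPMD-Size.

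First I would set up the MTS instance. For each state $S$, associate a canonical matching on $S$ (the minimum-weight perfect matching on the metric restricted to $S$ when $|S|$ is even; odd-size states are assigned prohibitive cost so they are never entered). The \emph{task cost} incurred in state $S$ at time $t$ is the instantaneous delay rate $\sigma(|R_t \setminus S|)$, where $R_t \subseteq [m]$ is the set of requests arrived by time $t$ and $\sigma$ is the size-based delay function; integrating between events exactly reproduces the delay paid by any MPMD-Size solution whose matched set is $S \cap R_t$. The \emph{switching cost} between states $S$ and $S'$ is defined by the matching-symmetric-difference analog advertised in the abstract: decompose the union of the canonical matchings on $S$ and on $S'$ into alternating paths and cycles, and take the total weight of the edges needed to reconfigure one into the other. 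A separate small lemma would have to verify that this yields a metric on $\{0,1\}^m$.

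Next I would argue both directions of the reduction. For completeness, any MPMD-Size schedule can be traced as a trajectory through the states, with task cost equal to the delay cost and total switching cost at most the matching cost, so $\mathrm{OPT}_{\text{MTS}} \le \mathrm{OPT}_{\text{MPMD-Size}}$. For soundness, given the trajectory produced by the MTS algorithm, I would build an MPMD-Size solution by committing an edge of the canonical matching whenever both endpoints enter the MTS state permanently, and charge any additional matching work to the switching cost via the symmetric-difference decomposition. Plugging in the $f(2^m)$-competitive MTS algorithm then yields an $f(2^m)$-competitive MPMD-Size algorithm.

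The main obstacle I expect is reconciling the one-way nature of MPMD-Size (once two requests are matched they stay matched) with the bidirectional state transitions of MTS: an MTS trajectory may enter a state $S'$ with $S' \not\supseteq S$, which cannot be literally mimicked by an algorithm that has already committed pairs inside $S \setminus S'$. The symmetric-difference analog must be designed precisely so that the MPMD-Size algorithm can \emph{implement} the MTS trajectory using only monotone matching commitments, while paying a matching cost bounded by the corresponding switching cost. This is the technical ingredient that the abstract flags as potentially reusable for other set-delay problems, and it is where the real work of the proof will lie.
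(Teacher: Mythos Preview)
Your high-level plan---encode the matched set as an MTS state, charge delay as task cost, and then confront the monotonicity mismatch---is exactly the architecture the paper uses. But two concrete pieces of your proposal diverge from the paper in ways that matter.

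\textbf{Switching cost.} You attach to each state $S$ its \emph{canonical} (min-cost) perfect matching and define the switching cost via the symmetric difference of these fixed matchings. The paper does \emph{not} do this; it defines a transition graph whose edges join $S$ and $S\cup\{p,q\}$ with cost $d(p,q)$, and takes shortest-path distances. The difference is not cosmetic: with your definition the completeness direction (``switching cost of OPT's trajectory $\le$ OPT's matching cost'') can fail. Take four points with $d(a,b)=d(c,d)=10$ and $d(a,c)=d(b,d)=1$; if OPT first matches $(a,b)$ then $(c,d)$ (cost $20$), the canonical matching on $\{a,b,c,d\}$ is $\{(a,c),(b,d)\}$, and your symmetric-difference switching from $\{a,b\}$ to $\{a,b,c,d\}$ costs $10+1+1=12$, so the total is $22>20$. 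The paper's transition cost for that step is simply $d(c,d)=10$, so completeness is immediate. Fixing a single matching per state is exactly what the paper avoids; its transition metric is intrinsically about \emph{sets}, and this is what lets a monotone MTS schedule be read off directly as an MPMD-Size solution (each transition $S\to S\cup\{p,q\}$ \emph{is} the match $(p,q)$).

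\textbf{Monotonization.} You correctly flag this as the crux and leave it open. The paper's proof is precisely this missing step: an online state-conversion procedure (\SSCA) that, given any MTS schedule $(S_0,\ldots,S_\ell)$, outputs a monotone schedule $(S'_0,\ldots,S'_\ell)$ with $|S'_i|\ge|S_i|$ (so the size-based processing cost never increases) and total transition cost no larger. The argument is a potential-function one with $\phi_i=c(S_i,S'_i)$; when $|S_i|>|S'_{i-1}|$ the algorithm augments $S'_{i-1}$ by a single pair $\{p,q\}\subseteq S_i\setminus S'_{i-1}$ extracted from a \emph{canonical difference graph} of the shortest transition path, and shows $c(S'_{i-1},S'_i)\le c(S_{i-1},S_i)-(\phi_i-\phi_{i-1})$. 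Your proposal's ``commit an edge whenever both endpoints enter the MTS state permanently'' is neither online nor obviously chargeable to switching cost; the paper's difference-graph machinery (Proposition~\ref{prop:realisable}, Lemma~\ref{lem:canonical}) is what makes this work, and it is the analog of augmenting paths you allude to---but defined on sets rather than on fixed canonical matchings.
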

We obtain our main result by applying state-of-the-art algorithms for MTS with some modifications. \begin{corollary}
	\label{cor:size-based-to-match}
	For \MPMDSize, there is an $O(2^m)$-competitive deterministic algorithm and an $O(m^4)$-competitive randomised algorithm. \end{corollary}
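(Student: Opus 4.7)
The plan is to instantiate the reduction of Theorem~\ref{thm:size-based-to-match} with off-the-shelf MTS algorithms, chosen separately for the deterministic and randomised settings, and then to handle whatever modifications are required to apply these algorithms to the particular MTS instances produced by the reduction.

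For the deterministic bound, I would invoke the classical work-function algorithm of Borodin, Linial, and Saks, which is $(2N-1)$-competitive on any $N$-state MTS. Substituting $f(N) = O(N)$ and $N = 2^m$ into Theorem~\ref{thm:size-based-to-match} immediately yields the claimed $O(2^m)$-competitive deterministic algorithm for \MPMDSize. For the randomised bound, I would start from a state-of-the-art polylogarithmically competitive randomised MTS algorithm, such as the tree-MTS algorithm of Bansal--Buchbinder--Naor combined with probabilistic embeddings into HSTs (or a more recent mirror-descent approach). Any such algorithm is $\mathrm{polylog}(N)$-competitive, and with $N = 2^m$ this becomes a bound polynomial in $m$; after accounting for the modifications needed to fit the reduction's MTS instance, we obtain the stated $O(m^4)$ bound.

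The main obstacle will be verifying that the MTS instance generated by Theorem~\ref{thm:size-based-to-match} is indeed compatible with these generic MTS algorithms. The states of that MTS correspond to subsets of requests, with movement costs dictated by the \emph{analog of the symmetric difference of matchings} advertised in the introduction, and processing costs dictated by the size-based delay function. Before plugging in the work-function algorithm or the randomised polylog algorithm, one must check that this state space and cost structure form a legitimate (pseudo-)metrical task system, and, in the randomised case, that the probabilistic embedding step does not need additional information about the metric the algorithm is forbidden from knowing in advance.

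Once compatibility is established, the corollary follows by direct substitution. The reason I expect the verification to be the hard part, rather than the choice of MTS algorithm, is precisely that the reduction is non-clairvoyant and operates on an unknown metric, so any MTS algorithm used must either be oblivious to the metric or be augmentable to learn it online without losing more than a polylogarithmic factor in $N = 2^m$; pinning down the exact loss is what determines the final exponent in the $O(m^4)$ bound.
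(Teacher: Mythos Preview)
Your plan matches the paper's: the work-function algorithm for the deterministic bound, and the $O(\log N)$-competitive randomised HST-MTS algorithm together with an \emph{online} HST embedding for the randomised bound. You also correctly flag that the randomised case hinges on learning the state-space metric online; the paper resolves this by extending Bartal et al.'s Abstract Network Design framework to accommodate processing costs and then invoking the min-operator property of MTS, which yields an $O(\log^3 N)$ embedding overhead on top of the $O(\log N)$ HST-MTS bound, hence $O(\log^4 N) = O(m^4)$.

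One concrete step is missing from your plan, and without it the deterministic bound does \emph{not} follow ``immediately'' by substitution. Every known MTS guarantee has the form $f(N)\cdot\mathrm{OPT} + D$, where $D$ is the diameter of the state space. In the MTS instance produced by the reduction, $D$ is at least the cost of the \emph{maximum}-cost perfect matching on the requests, which can be arbitrarily larger than $\mathrm{OPT}$. The paper eliminates this additive term by a guess-and-double on $\mathrm{OPT}$: in the phase with current guess $2^j$, restrict the MTS to states within distance $2^j$ of the initial (empty) state, so the effective diameter is $O(\mathrm{OPT})$, and restart the MTS algorithm on the enlarged state space whenever the guess is exceeded. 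A geometric sum over phases then recovers the purely multiplicative $O(f(N))$ ratio. This fix is needed for both the deterministic and the randomised instantiation, and is the second of the two ``modifications'' alluded to just before the corollary statement.
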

We emphasise that our algorithms are non-clairvoyant and do not need to know the metric space in advance. To the best of our knowledge, this is the first non-clairvoyant online algorithm for this problem.
Non-clairvoyant algorithms nevertheless have been designed for other online problems such as the Set Cover problem~\cite{setCoverDelayNon-Clairvoyance}, the $k$-server problem~\cite{nonclairvoyantKServer}, and multi-level aggregation~\cite{soda2023}. We also remark that every deterministic algorithm for known variants of online matching with delays has a competitive ratio that depends on $m$.

We complement Corollary~\ref{cor:size-based-to-match} with the following lower bounds. 
\begin{theorem}
	\label{thm:det_lb_size-based}
	 Every deterministic algorithm for \MPMDSize has competitive ratio $\Omega(n)$.\end{theorem}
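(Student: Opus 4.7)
The plan is to prove this $\Omega(n)$ lower bound by reducing from the classical deterministic online metric matching problem, which has competitive ratio $\Omega(n)$ on suitable $n$-point metrics (for instance, $n$ points on a line with unit spacing, by Kalyanasundaram--Pruhs). First I would fix such a metric on $n$ points and design a size-based delay function $f$ with a threshold: $f(k) = 0$ for $k \leq n$ and $f(k) = M$ for $k > n$, where $M$ is taken arbitrarily large. Since $f$ is announced in advance, the construction applies to both clairvoyant and non-clairvoyant algorithms.

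The adversary then proceeds in two phases. In Phase~1, it releases $n$ requests, one at each metric point; the algorithm accumulates these as pending without incurring delay, since $k = n$ leaves $f = 0$. In Phase~2, it releases $n$ additional requests one at a time, choosing positions adaptively based on the (deterministic) algorithm's responses. Each Phase~2 release momentarily raises $k$ to $n+1$, forcing the algorithm to match one pair immediately to avoid the prohibitive delay $M$. The adversary's Phase~2 positions follow the standard online metric matching adversary strategy, exploiting the algorithm's deterministic behaviour to force suboptimal matches.

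Over the Phase~2 releases the algorithm is compelled into $\Theta(n)$ matches, accumulating $\Omega(n)$ total matching cost. By contrast, OPT, knowing the full request sequence, can pair each Phase~2 request optimally with a Phase~1 counterpart. With a careful adversarial placement---for instance, locating each Phase~2 request near a designated Phase~1 partner---OPT obtains total cost $O(1)$. The ratio then gives the desired $\Omega(n)$.

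The main obstacle is that, unlike in classical online metric matching, at each forced event the algorithm may choose to pair two Phase~1 requests together rather than pairing a Phase~2 request with a Phase~1 request. I would handle this by choosing the metric (e.g., a line with unit spacing) so that every pair of pending requests lies at distance $\Omega(1)$, ensuring $\Omega(1)$ cost per forced match irrespective of which pair the algorithm picks. This guarantees the $\Omega(n)$ bound against all deterministic strategies, and since the delay function is fixed and revealed up front, the bound applies to clairvoyant algorithms as well.
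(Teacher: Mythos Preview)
Your outline has the right shape but contains a genuine gap in the final paragraph. You assert simultaneously that (a) each Phase~2 request is placed \emph{near a designated Phase~1 partner} so that OPT pays $O(1)$ total, and (b) the metric is chosen so that \emph{every pair of pending requests lies at distance $\Omega(1)$}. These are in direct conflict: the moment a Phase~2 request arrives next to its still-unmatched Phase~1 partner, that pair is pending and at distance $o(1)$, so ALG can match them cheaply. Nothing in your construction prevents ALG from always matching the new Phase~2 request to its nearby partner whenever that partner is still free, and you have not argued that the adversary can keep the partner already matched. A second, smaller issue: a time-invariant threshold $f(k)=0$ for $k\le n$ violates the set-delay requirement $\sum_t f_t(U)=\infty$ for nonempty $U$, so ALG is never forced to complete the matching; with your fixed threshold only every other Phase~2 arrival forces a match, and after Phase~2 ends ALG can sit forever with $\le n$ pending requests.

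The paper's proof resolves exactly this tension, and it is worth seeing how. It works on the $n$-point \emph{uniform} metric (all distances $1$), uses a \emph{decreasing} threshold (in phase $i$ the delay is $\infty$ once more than $n-i$ requests are pending), and places each new request at distance $0$ from a point that is both \emph{inactive} (its resident request has already been matched by ALG, so ALG cannot match locally) and \emph{unsaturated} (has received at most one request so far, so OPT can later match the two co-located requests at cost $0$). The nontrivial step is the invariant that an inactive unsaturated point always exists at the end of each phase; the paper proves this by an induction and a counting argument. With $2n-2$ requests placed this way, ALG is forced into $n-1$ cross-matches of cost $1$ each, while OPT matches all but one pair at cost $0$. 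Your sketch gestures at adaptive placement but never supplies this invariant, which is the heart of the argument.
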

\begin{theorem}
	\label{thm:rand_lb_size-based}
	Every randomised algorithm for \MPMDSize has competitive ratio $\Omega(\log n)$.\end{theorem}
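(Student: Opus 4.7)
The plan is to apply Yao's minimax principle: I construct a distribution over \MPMDSize instances and argue that any deterministic online algorithm has expected cost $\Omega(\log n)$ times the expected offline optimum on this distribution. The underlying metric will be a balanced binary hierarchically separated tree (HST) with $n$ leaves (depth $\log n$), with edge weights chosen to scale geometrically with depth so that the diameter is bounded by a constant.

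For the size-based delay I would use a simple step function, such as $g(k)=0$ when $k\le 1$ and $g(k)=1$ otherwise. This depends only on the pending count and is thus size-based, and it effectively charges at unit rate while two or more requests are pending, creating a sharp tension between matching cost and delay cost. Varying the step height and location would let me tune deadlines for each round of the construction without breaking the size-based property.

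The adversary releases requests in $\log n$ rounds. Before round $i$ begins, all previous rounds' requests have been matched, so the pending set is empty. In round $i$, the adversary releases a pair of requests at two leaves that lie inside one of the two sibling subtrees rooted at a chosen depth-$i$ node, with the sibling choice drawn uniformly at random. Because earlier requests are no longer pending, the delay cost in round $i$ depends only on the current pair, keeping the analysis modular across rounds. Within a round, the HST geometry makes matching inside the correct subtree exponentially cheaper than matching across; since a deterministic algorithm cannot anticipate the random sibling choice, it must either commit early (paying cross-subtree distance with constant probability) or wait (paying the step delay). Either way I expect each round to contribute $\Omega(1)$ to the competitive ratio, and summing across $\log n$ rounds to give $\Omega(\log n)$.

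The main obstacle will be tuning the geometric edge weights against the step delay threshold so that the per-round lower bound composes correctly and so that \textsc{opt} pays only $O(1/\log n)$ per round in expectation. A further subtlety is enforcing that the algorithm cannot amortise delay cost across rounds by leaving old requests pending; this should follow from the fact that at the start of each round the pending set is empty by design, with the adversary postponing the next round until the algorithm has matched the current pair (or else charging a dominating distance cost if the algorithm refuses to do so).
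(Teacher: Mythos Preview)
Your proposal has a genuine gap: as described, the construction does not create any information asymmetry that the algorithm cannot resolve before acting. You release a \emph{pair} of requests inside one randomly chosen sibling subtree; but once both requests are revealed the algorithm sees exactly where they are and can match them within that subtree at the same cost as the offline optimum, so the random sibling choice hides nothing. Your delay function, with $g(1)=0$, compounds the problem: whenever only one request is pending the algorithm incurs zero delay, so it can always wait until the second request of a round appears before committing. Hence no ``commit early versus wait'' dilemma arises, and the per-round $\Omega(1)$ contribution you anticipate does not materialise. Your plan to force the pending set to empty between rounds also cannot appeal to ``charging a dominating distance cost'': the adversary controls only delay, not distance, so at best it can raise the delay rate to force a match---and once it does, both the algorithm and the optimum face the same two requests in the same known locations. (Separately, $g(1)=0$ for all $t$ violates the requirement that every nonempty pending set eventually accumulates infinite delay, though this is a minor technicality.)

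The paper's proof takes an entirely different route. It works on the \emph{uniform} $n$-point metric (not an HST) and uses a time-varying, deadline-style size-based delay that forces exactly one match per phase. In phase~$1$ one request lands on each of the $n$ points; in every later phase a single new request is placed uniformly at random on one of the remaining \emph{unsaturated} points (points that have received fewer than two requests). The key mechanism is that the algorithm's \emph{past} matches---which are irrevocable---determine which points are currently inactive: with probability at least $2/(n-i+1)$ the new request lands on an inactive unsaturated point, forcing a distance-$1$ match in phase $i$. Summing over phases gives a harmonic series and expected cost $\Omega(\log n)$, while the offline optimum pays exactly $1$ because in the end $n-2$ points receive two requests (matched locally for free) and two points receive one request each. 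The idea your sketch is missing is precisely this interaction between irreversible earlier matches and the random placement of \emph{future} single requests; that is what generates the $\Omega(\log n)$ gap here, not HST geometry.
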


The proofs of theorems~\ref{thm:det_lb_size-based} and~\ref{thm:rand_lb_size-based} can be found in the Appendix.

Finally, we consider MPMD with uniform concave delay in the clairvoyant setting and give the first deterministic algorithm for it. In this problem, we are given a non-negative, non-decreasing concave function $f$. The delay cost incurred by a request $r$ is $f(w_r)$ where $w_r$ is the time between $r$'s arrival and when it was matched. The total delay cost is the sum of the delay cost of each request. 

\begin{theorem}
    \label{thm:concave}
    There exists an $O(m)$-competitive deterministic algorithm for MPMD with concave delay.
\end{theorem}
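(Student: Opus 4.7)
The plan is to analyze a threshold-based online algorithm via a symmetric-difference charging argument. The algorithm maintains the set of pending requests and matches a pair $(r, r')$ as soon as the condition $f(w_r) + f(w_{r'}) \ge d(r, r')$ first holds, where $w_r = t - t_r$ denotes the current waiting time of $r$. Since $f$ is known in advance, the threshold time of any candidate pair is determined the moment both of its members have arrived; the algorithm keeps a priority queue of pair-thresholds and processes them in order (with a canonical tie-break when several pairs become ripe simultaneously). This yields a valid online algorithm: no information about future arrivals is required.

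The local cost accounting is immediate. At the moment ALG matches $(r, r')$, the threshold condition gives $d(r, r') \le f(w_r) + f(w_{r'})$, so the matching distance is dominated by the accumulated delay of the pair, and the total algorithm cost (matching plus delay) is at most $2 \sum_r f(w_r^{\mathrm{ALG}})$. It remains to bound this total delay against OPT. I view ALG's matching $M$ and OPT's matching $M^*$ as two perfect matchings on the same $m$ requests; their symmetric difference $M \triangle M^*$ decomposes into vertex-disjoint alternating cycles $C_1, C_2, \ldots$, while shared edges contribute equally to both matchings. The plan is to show, for each cycle $C$ of length $2k$, that the ALG-delay accumulated on the vertices of $C$ is at most $O(k)\cdot \mathrm{OPT}(C)$, where $\mathrm{OPT}(C)$ is OPT's cost restricted to the vertices and OPT-edges of $C$. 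Summing over cycles and using $\sum_i |C_i|\le m$ then yields the $O(m)$-competitive ratio.

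The main obstacle is the per-cycle bound. Three ingredients enter: subadditivity $f(a+b)\le f(a)+f(b)$ of concave $f$ with $f(0)\ge 0$ (which follows from concavity together with non-negativity); the triangle inequality around the cycle, which relates each ALG-edge distance $d(r, r')$ to a sum of OPT-edge distances along the alternating cycle; and the algorithm's invariant that, prior to matching, every still-pending pair $(r, r')$ satisfies $f(w_r) + f(w_{r'}) < d(r, r')$. The plan is to process the cycle's ALG-edges in the order they are matched by ALG: the first ALG-match establishes a baseline bound via the threshold condition, and each subsequent ALG-edge's delay is telescoped through the chain of OPT-edges around the cycle using the triangle inequality, with subadditivity absorbing combined waiting times so that each term remains a single $f$-value. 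The factor $k$ in the per-cycle bound arises because each of the $k$ ALG-edges may be charged through a chain of up to $k$ OPT-edges before closing the cycle, and summing over cycles produces the $O(m)$ overall ratio.
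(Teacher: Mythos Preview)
Your proposal takes a genuinely different route from the paper: you analyze a threshold-matching algorithm via alternating-cycle decomposition, whereas the paper runs a primal--dual moat-growing algorithm in the concave time-augmented metric and charges every matched pair against the dual objective (yielding $O(1)\cdot D_{\text{value}}\le O(1)\cdot\text{OPT}$ per pair, hence $O(m)$ overall).

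However, the per-cycle step of your argument has a real gap. You write that ``the triangle inequality around the cycle relates each ALG-edge distance $d(r,r')$ to a sum of OPT-edge distances,'' but the triangle inequality around an alternating cycle bounds $d(r,r')$ by the sum of \emph{all} other edges of the cycle, half of which are ALG-edges, not OPT-edges. The invariant $f(w_p)+f(w_q)\le d(p,q)$ for pending pairs does let you bound $f(w_r(\tau))$ by $d(r,r^*)$ when the OPT-neighbor $r^*$ is still pending at time $\tau$; the difficulty is the case where $r^*$ was already matched by ALG at some earlier time $\tau'<\tau$. Subadditivity gives $f(w_r(\tau))\le d(r,r^*)+f(\tau-\tau')$, and now you must bound $f(\tau-\tau')$ --- the extra waiting between two ALG-match events --- purely in terms of OPT-edge costs. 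Your ``telescoping'' sentence asserts this can be done but does not say how; following the natural chain $r\to r^*\to (r^*$'s ALG-partner$)\to\cdots$ around the cycle produces a recursion whose closure is not evident, and a naive unrolling seems to give $O(k^2)$ per cycle rather than $O(k)$.

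It is plausible that a careful inductive argument along these lines can be pushed through, but as written the crux is missing. The paper sidesteps exactly this difficulty: by maintaining a feasible dual and growing each active set's variable at the minimum admissible rate, it gets a global lower-bound certificate that absorbs both the metric distance and the ``extra'' online delay in one stroke, so no cycle-by-cycle telescoping is needed.
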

The correctness and competitiveness of our algorithm only relies on the fact that the time-augmented space satisfies the properties of a metric space. Similar to previous deterministic solutions for uniform linear delay, our algorithm does not need the metric space to be finite, and does not need to know it in advance. 

\subsection{Our Techniques}
\label{ssec:techniques}

Our main technical contribution is an online reduction from the \MPMGAD problem to MTS, which constitutes the proof of Theorem~\ref{thm:size-based-to-match}. The Metrical Task System (MTS) problem, introduced by Borodin et al. \cite{originalMTS}, is a cost minimisation problem defined by a set of states $S = \{s_1, s_2,..., s_k\}$ and a cost matrix $c$ that defines the cost of moving between states. The input consists of an initial state $S_0$ and a sequence of tasks $T = (t_1,..., t_\ell) $. Each task $t_j$ is associated with a $k$-dimensional cost vector $C_{j}$ whose $i$-th coordinate defines the cost of servicing task $t_j$ in state $s_i$. For a given input task sequence $T$, a solution is a sequence of states (called a \emph{schedule}) $\sigma = (S_1, S_2, ..., S_\ell)$, where $S_j$ is the state that task $j$ is processed in. The total cost of a schedule consists of the costs associated with moving states (\emph{transition cost}), as well as the cost of processing the tasks (\emph{processing cost}).The aim is to produce a schedule of minimum cost.

We briefly outline the three main parts of the reduction below. 

\subparagraph*{Step 1: \MPMGAD to MTS}
The first part of the reduction transforms an instance of \MPMGAD into an instance of MTS. A natural approach at a reduction to MTS is to use the set of all possible matchings of requests as the set of states for the MTS instance. The transition cost between two states is then the total length of the edges in the symmetric difference of the corresponding matchings. Finally, there is a task for each timestep in the matching problem and the cost of processing the task in a state is the instantaneous delay incurred by the set of unmatched requests. Unfortunately, the number of states is equal to the number of possible matchings between the requests which is $\sim (\frac{m}{e})^{m/2} \frac{e^{\sqrt{m}}}{(4e)^{1/4}}$ for $m$ requests.

Instead, we use the set of all possible even-sized subsets of the requests as the set of MTS states. Each state represents a set of requests that are matched. The set of input states thus develops over time as more requests arrive. The initial state is the empty set. 

We now define the transition costs of the MTS. We define a \emph{transition graph} $G(V,E)$ where $V$ is the set of even-sized subsets of requests. The transition graph $G$ has an edge between two states $S$ and $S'$ if $S \subset S'$ and $|S'| = |S| + 2$. In other words, $S_2$ consists of the same requests as $S_1$ with 2 additional requests, say $p,q$. The cost of the edge between the two states is $d(p,q)$, the distance between $p$ and $q$ in the original \MPMGAD instance. The transition cost between any two states in the MTS instance is defined to be the minimum cost path between the two corresponding nodes in $G$. The delay cost is translated into the vector costs associated with serving tasks: for any timestep $t$, the cost of servicing a task in state is simply the instantaneous delay cost accumulated by the set of requests that have arrived so far in the original \MPMGAD instance, that are not in that state.

Henceforth, we refer to an instance of MTS that is reduced from \MPMGAD as \emph{\MPMGADMTS}, and an MTS instance that is reduced from \MPMGAD with size-based delay as \emph{\MPMDSizeMTS}.

\begin{definition}[Monotone]
	A schedule $\sigma = (S_1, \ldots, S_\ell)$ is \emph{monotone} if every transition only adds requests to the current state, i.e.~$S_{i-1} \subseteq S_i$ for every $i$. In other words, the path never involves moving to a strictly smaller state. An algorithm for \MPMGADMTS is \emph{monotone} if it always produces a \emph{monotone} schedule.
\end{definition}
It is easy to see that a monotone schedule can be converted into a solution for the \MPMGAD instance without any increase in cost: when the schedule adds a set of requests $S'$ to its current state, we add a min-cost perfect matching on $S'$ to our matching. However, when we run existing MTS algorithms on the \MPMGADMTS instance, there is no guarantee that they will return a monotone schedule. 

\subparagraph*{Step 2: Converting to Monotone}
Fortunately, it is possible to convert, in an online manner, an arbitrary \MPMDSizeMTS solution to a monotone solution at no extra cost.
We do this by designing an online algorithm which, given an online sequence of states $\sigma = (S_1, \ldots, S_{\ell})$, produces for each state $S_i$ a corresponding state $S^\prime_i$ such that the resulting schedule produced by the algorithm is \emph{monotone}. We refer to an algorithm that \emph{transforms} a given state as a \emph{state conversion algorithm}. 

Since the instantaneous delay in the \MPMDSize instance is a non-increasing function of currently matched requests, for every task in the \MPMDSizeMTS instance created by the reduction, the processing cost is a monotone non-increasing function of the state size. Our algorithm will exploit this by maintaining the invariant that our state is always at least large as that of $\sigma$. The technical crux here is to show that when our state is smaller, we can augment our state in a cost-efficient manner. If the MTS states were matchings, we can consider augmenting paths in the symmetric difference of our matching and that of $\sigma$. Motivated by this, we define analogs of the symmetric difference and augmenting paths to augment our state. We believe these ideas are useful for other interesting special classes of set delay functions.

\subparagraph*{Step 3: Applying MTS algorithms} There are two issues that prevent us from applying existing MTS algorithms directly. First, the cost bounds of all known algorithms for MTS have an additive term that is equal to the diameter of the MTS state space, and the MTS instance created by our reduction has state space with diameter much larger than the optimal. The second issue is that our reduction creates an MTS instance whose state space is constructed online, i.e.~the states arrive over time. At a high level, the first issue can be overcome by a guess-and-double approach. The second issue is a problem for randomised MTS algorithms that rely on embedding the state space into a tree as a pre-processing step. This is overcome by using the online embedding of \cite{onlineEmbedding}. See Section~\ref{ssec:analysis} for a more detailed discussion.

\subparagraph*{Designing a deterministic algorithm for MPMD with Concave Delay}
We use the (offline) moat-growing framework (generally used for constrained connectivity problems) by Goemans and Williamson~\cite{forest}, to design an online deterministic linear programming-based algorithm for MPMD with uniform concave delay. Our algorithm is a modification of an existing linear programming-based algorithm due to Bienkowski et al.~\cite{pd}, who give a deterministic competitive algorithm for MPMD with uniform linear delay. Their algorithm heavily relies on the fact that, when the delay function is linear, requests accumulate delay cost at the same rate at all times, regardless of their arrival time. The main challenge in applying the framework to concave delay is that, unlike in the case of linear delay, the requests can accumulate delay at different rates at any point in time, depending on their arrival time. See Section~\ref{sec:concave} for a detailed discussion.
\subsection{Related Work}
MPMD was introduced by Emek et al.~\cite{emekoriginal} where the delay functions associated with each request are uniform linear. They designed a randomised algorithm that achieves a competitive ratio of $O(\log^2 n + \log \Delta)$, where $n$ is the number of points in the metric space and $\Delta$ is its aspect ratio. Azar et al.~\cite{polylog} used a randomised HST embedding to provide a $O(\log n)$-competitive almost-deterministic algorithm, improving Emek et al.'s bound and removing the dependency on the aspect ratio of the metric space. Furthermore, they provided a lower bound of $\Omega(\sqrt{\log n})$ for any randomised algorithm in the case of linear delay. Ashlagi et al.~\cite{bipartite} improved this lower bound to $\Omega(\frac{\log n}{\log \log n})$ and $\Omega(\sqrt{\frac{\log n}{\log \log n}})$ for the bipartite case, which are the best known so far. Liu et al. furthermore adapted the algorithm by Azar et al. to the bipartite setting and improved the analysis of Emek et al.'s algorithm to $O(\log n)$. The next deterministic algorithm for simple metrics was by Emek et al.~\cite{2sources} who proved a competitive ratio of 3 for the simple metric space of 2 points. The first deterministic algorithm for general metric spaces was by Bienkowski et al.~\cite{spheres} and their analysis resulted in a competitive ratio of $O(m^{2.46})$. Bienkowski et al.~\cite{pd} and Azar et al.~\cite{hemispheres} concurrently and independently improved this bound to $O(m)$ and $O(m^{0.59})$ respectively, introducing the first linear and sub-linear deterministic solutions to the problem. The algorithms above assumed the delay cost to be given by a uniform linear delay function associated with each individual request. 

Liu et al.~\cite{convex} was the first to consider convex delay functions and demonstrated an interesting gap between the solutions for the case with linear delay and convex delay on a uniform metric space by giving a deterministic asymptotically optimal $O(m)$-competitive algorithm for the uniform metric space.

Azar et al.~\cite{concave} subsequently considered the problem with concave delay and achieved an $O(1)$-competitive deterministic algorithm for the single point metric space and an $O(\log n)$ randomised algorithm for general metric spaces. 

The above algorithms assumed all requests incurred delay in accordance with uniform delay functions and regarded the delay function to be associated with each individual request. Furthermore, all prior solutions to MPMD assumed clairvoyance. To the best of our knowledge, no one has considered the non-clairvoyant generalisation of the problem where the delay function depends on the set of unmatched requests.

Non-clairvoyant algorithms nevertheless have been designed for other online problems such as the Set Cover problem~\cite{setCoverDelayNon-Clairvoyance, soda2023}, the $k$-server problem~\cite{nonclairvoyantKServer}, and multi-level aggregation~\cite{soda2023}.

The notion of introducing delay to online problems originated well before it was applied to online metric matching and finds applications in amongst others aggregating messages in computer networks, aggregating orders in supply-chain management, and operating systems. See~\cite{WilliamsPaper, bin-packingClusteringDelays, bin-packingDelay, cachingDelay, multiLevelAggregation, multiLevelAggregationDepth, networkDesignDelay, networkDesignDelayLowerBound, onlineService, onlineServiceLine, setCoverDelayNon-Clairvoyance, soda2023} for further reading. All problems above define the cost of delay as a function associated with each request. To the best of our knowledge, no online problems with delay have so far defined the cost of delay as an arbitrary function of the set of unmatched requests. \section{Preliminaries}
\label{sec:preliminaries}
In this section we introduce our notation and give formal definitions for set delay and size-based delay functions, as well as \MPMGAD.

\subsection{Min-cost Perfect Matching with Delay}
The min-cost perfect matching with delays (MPMD) problem, introduced by Emek et. al~\cite{emekoriginal} is defined on a metric space $(V,d)$, which consists of a set of points $V$ and distance function $d: V\times V \rightarrow \mathbb{R}^+$.
An online input instance over $(V,d)$ is a sequence of requests $R = (r_1, ..., r_{m})$ that arrive at points in the metric space over time. Each $r_k \in R$ has an associated position and arrival time. We assume, without loss of generality, that time is divided into discrete timesteps.

Upon the arrival of a request, the algorithm must choose to either match the request, incurring a cost equal to the distance between the two requests in the metric space, or to delay the request, incurring a cost given by a delay function associated with the request, in the hope of finding a more suitable match in the near future. 

A solution produced by an online matching algorithm is a sequence of matchings $\mathbb{M} = (M_0...M_{final})$, where $M_i$ is the matching associated with the $i$th timestep. Note that we assume that requests only arrive at the start of a timestep. A solution $\mathbb{M}$ must satisfy the following properties:
\begin{itemize}
	\item $M_0 = \emptyset$
	\item $M_{final}$ is a perfect matching 
	\item For all $i$, $M_{i} \subseteq M_{i+1}$ 
\end{itemize}
We refer to the third property as \emph{monotonicity}.
The cost associated with a solution $\mathbb{M}$ consists of the sum of the distances between matched requests in $M_{final}$ plus the sum of the delay costs incurred by all requests. 
The aim of an online matching algorithm is to produce a sequence of matchings that satisfies the above properties with minimal cost. 

In the original MPMD problem, the delay cost incurred by a request is the time between its arrival and when it was matched. In \MPMGAD, the instantaneous delay incurred at a timestep $t$ can be an arbitrary function of the set of currently unmatched requests $U_t$. The total delay cost is the sum over timesteps $t$ of $f_t(U_t)$ where $f_t$ is a set delay function as defined below.
\begin{definition}[Set delay function]
    Let $U$ be a set of requests. We define a delay function $f_t:2^U \rightarrow \mathbb{R}^{\geq 0}$, for any timestep $t$, to be a \emph{set delay} function if it satisfies the following properties:
\begin{itemize}
	\item $f_t(\emptyset) = 0$
	\item $A \subseteq B \Rightarrow f_t(A) \leq f_t(B)$
	\item For all $\emptyset \neq U \in 2^V$, we have $\sum_{t=0}^\infty f_t(U) = \infty$	
\end{itemize}
The last property implies that all requests must eventually be matched.
\end{definition}

In \MPMDSize, the set delay function is a size-based delay function, as defined below.
\begin{definition}[Size-based delay function]
	We define a delay function $f_t:2^U \rightarrow \mathbb{R}^{\geq 0}$ to be \emph{size-based} if, for any timestep $t$, it satisfies all properties of a set delay function and is monotone non-decreasing as a function of the size of the set of requests $U$.
\end{definition}

\section{A Lower Bound for \MPMGAD}
In this section, we prove Theorem~\ref{thm:lb_general}.

\begin{proof}[Proof of Theorem~\ref{thm:lb_general}]
Consider a four-point metric space (as depicted in figure~\ref{fig:lb_MPMGAD}) with three points at distance $\epsilon$ from one another, and the fourth point ($p_4$) at distance $D$ from the other points, where $D$ is the diameter of the metric space.
\begin{figure}[H]
	\begin{center}
		\includegraphics[scale=0.3]{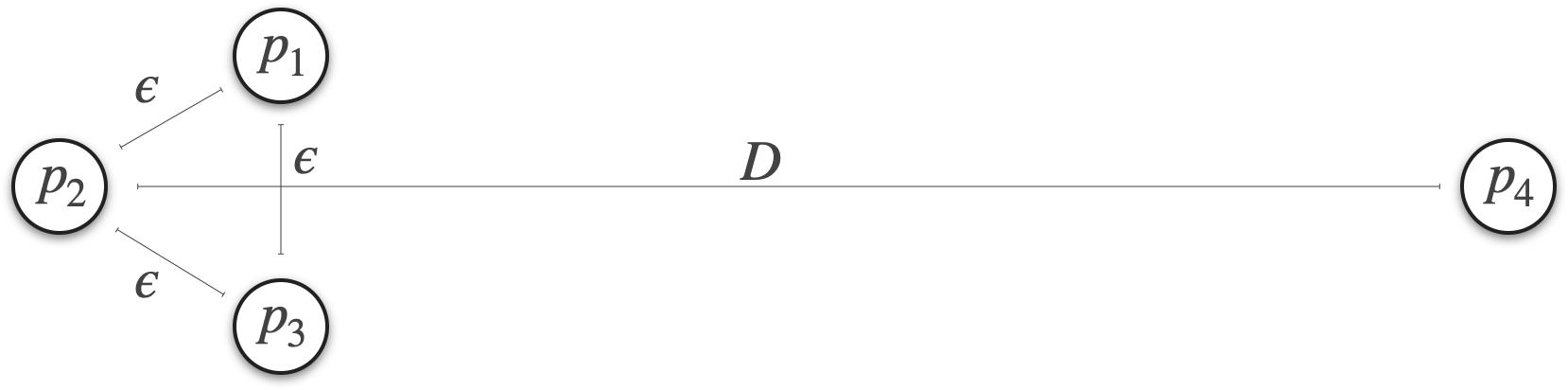}
		\caption{A visualisation of the four-point metric space}
		\label{fig:lb_MPMGAD}
	\end{center}
\end{figure}
We define a request sequence of six requests $R = (r_1, r_2, r_3, r_4, r_5, r_6)$ where the first four requests arrive at time $t = 0$ and the latter two arrive at time $t = 2$. For each $i \in \{1...4\},$ we place request $r_i$ on point $p_i$. At $t=2$, we then place request $r_5$ on $p_3$ and $r_6$ on $p_4$.
In terms of delay, we are working with the special case of deadline functions. A \emph{deadline function} is a delay function that is 0 up until some time $d$, called the deadline, and $\infty$ afterwards. When the deadline of a request is reached, the algorithm must ensure that the request is matched.

At $t=0$, request $r_1$ reaches its deadline and hence the algorithm will need to match two requests. Since the algorithm is non-clairvoyant, the algorithm has no knowledge of the deadlines of future requests. We therefore assume without loss of generality that it matches $r_1$ to $r_3$ and pays a distance cost of $\epsilon$. At $t=1$, $r_2$ reaches its deadline and the algorithm is forced to match it to $r_4$ at a distance cost of $D$. At $t=2$, the final two requests will arrive and instantly reach their deadline. The algorithm will consequently need to match $r_5$ to $r_6$ at a distance cost of $D$. The total cost of ALG is $2D + \epsilon$.

The optimal offline solution OPT is to match $r_1$ to $r_2$ and match locally at $t=2$ on points $p_3$ and $p_4$. The total cost of OPT is therefore $\epsilon$. The competitive ratio of the algorithm is $\Omega(D/\epsilon)$ and $D/\epsilon$ is the aspect ratio of the metric space, as desired.
\end{proof}
	\label{sec:reduction}
	In this section, we prove Theorem~\ref{thm:size-based-to-match} by defining a reduction from \MPMGAD to MTS. 
We start by translating an arbitrary instance of \MPMGAD into an instance of MTS in Section~\ref{ssec:reduction}. In Section~\ref{ssec:monotone}, we show that we can transform an arbitrary \MPMDSizeMTS solution into a monotone solution of the same or less cost.
As observed in the Introduction, a monotone schedule directly corresponds to a solution for the online matching with delay problem of equal cost. This completes the proof of Theorem~\ref{thm:size-based-to-match}. We finish this section with a proof of Corollary~\ref{cor:size-based-to-match}.

\subsection{Translating an instance of \MPMGAD into and instance of MTS}\label{ssec:reduction}
We define the set of internal states of the MTS instance to be the set of all possible subsets of the requests that have arrived so far in the \MPMGAD instance. In order to define the transition cost associated with moving between states, we define the following graph.
\begin{definition}[Transition graph $G$]
    The nodes of $G$ are the states of the \MPMGADMTS instance. We define an undirected edge $(S,S') \in E$ between states $S$ and $S'$ if $S = S' \cup \{r, r'\}$ and define its cost $c(S,S')$ to be $d(r, r')$. Paths in the transition graph are called \emph{transition paths}.
\end{definition}
We define the transition cost $c(S, S')$ of moving between arbitrary states $S$ and $S'$ in the MTS instance to be the cost of the shortest path between them in the transition graph $G$. Note that a path in $G$ between states $S, S'$ corresponds to a sequence of transitions from $S$ to $S'$, where each edge in the path corresponds to a transition that either adds two requests or removes two requests. Also, note that any schedule $\sigma$, produced by an online MTS algorithm, corresponds to a path in the metric completion of $G$ and the total transition cost incurred by $\sigma$ is simply the total cost of the path. Each task in the MTS instance is associated with a given timestep in the \MPMGAD problem. The cost vector associated with each task is, for every state $S$, the instantaneous delay cost accumulated by the set of requests not in $S$.
\begin{definition} [$R_i$]
        We define $R_i$ to be the set of requests that have arrived up to and including timestep $i$ in the original \MPMGAD instance.
\end{definition}
The total cost of a schedule $\sigma = (S_0,...,S_T)$ can be expressed as follows.
\begin{align*}
	\cost(\sigma) = \sum_{i=0}^{T-1} c(S_i, S_{i+1}) + f_i(R_i \setminus S_i).
\end{align*}

By construction, the cost associated with processing the tasks represent the delay cost incurred by the requests, while the distance cost is represented by the transition costs associated with moving between states. 
 \subsection{Size-based delay functions admit monotone scheduling algorithms}\label{ssec:monotone}
In this subsection, we prove the existence of an online algorithm that converts an arbitrary \MPMDSizeMTS solution into a monotone solution, without incurring any extra cost. 
\begin{lemma}
	\label{lem:any_to_monotone}
	There exists an online algorithm that converts an arbitrary \MPMDSizeMTS solution into a monotone solution of the same or less cost. 
\end{lemma}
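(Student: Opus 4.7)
The plan is to design an online conversion algorithm $\mathcal{A}$ that reads $\sigma = (S_0, \ldots, S_\ell)$ one state at a time and returns a monotone schedule $\sigma' = (S'_0, \ldots, S'_\ell)$ satisfying the invariant $|S'_i| \geq |S_i|$ throughout. Without loss of generality I may assume $\sigma$ is a unit-step walk in the transition graph $G$ (each step is a stay, the addition of one pair, or the removal of one pair), since any MTS schedule can be refined into such a walk without changing the total transition cost. The algorithm is then simple to state: if $|S'_{i-1}| \geq |S_i|$, then $\mathcal{A}$ stays and sets $S'_i = S'_{i-1}$; otherwise $\sigma$ has just added a pair and $|S'_{i-1}| = |S_i| - 2$, so $\mathcal{A}$ grows $S'_{i-1}$ by adding exactly one pair. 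Because $f_i$ is size-based, the invariant immediately gives $f_i(R_i \setminus S'_i) \leq f_i(R_i \setminus S_i)$, so the processing cost at every step is no worse under $\sigma'$ than under $\sigma$.

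The central step is choosing which pair to add when growing. Alongside $S'_i$ I maintain a witness matching $M'_i$ consisting of the pairs $\mathcal{A}$ has added, and I track $\sigma$'s witness matching $M^\sigma_i$ inferred from its add/remove history. When $\mathcal{A}$ must grow, view $H_i = M^\sigma_i \triangle M'_{i-1}$ as a graph; this is our analog of the matching-theoretic symmetric difference, and its components are alternating paths and cycles. A standard endpoint-degree count, combined with $|S_i \setminus S'_{i-1}| - |S'_{i-1} \setminus S_i| = |S_i| - |S'_{i-1}| = 2$, shows that at least one component of $H_i$ is an augmenting path $P$ whose endpoints $x, y$ both lie in $S_i \setminus S'_{i-1}$. $\mathcal{A}$ then adds the pair $(x,y)$ to $M'$; the triangle inequality in the metric gives $d(x,y) \leq w(P)$, so $\mathcal{A}$ pays only as much as the weight of an alternating path whose edges come from the past matchings of $\sigma$ and $\mathcal{A}$.

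To complete the proof I would bound the total transition cost of $\sigma'$ by that of $\sigma$ via an amortized analysis using a potential function built from the weight of $M^\sigma_i \triangle M'_i$. The target is that at every step the step cost of $\mathcal{A}$ plus the change in potential does not exceed $\sigma$'s step cost. $\sigma$'s own additions and removals shift the potential by at most $\pm w(e)$, so those operations are immediate; the delicate case is the augmenting-path growth, where the new edge $(x,y)$ together with $P$ closes a cycle in the updated symmetric difference, and one must argue that the resulting change in potential, combined with $d(x,y) \leq w(P)$, absorbs the $d(x,y)$ cost without leaving a residual. This case analysis is the main obstacle: a naive potential gives the right inequality when $\mathcal{A}$ mirrors $\sigma$'s move but requires either a more refined potential or a direct charging argument that distributes $w(P)$ across past $\sigma$-operations in the nontrivial augmenting case. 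Once the amortized inequality is in place, telescoping yields the transition-cost bound, which combined with the processing-cost comparison from the first paragraph proves $\cost(\sigma') \leq \cost(\sigma)$.
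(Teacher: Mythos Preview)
Your high-level strategy matches the paper's: maintain the size invariant $|S'_i|\ge |S_i|$ so that size-based processing cost is dominated, and control transition cost by a potential argument that pays for the ``grow'' steps. You are also right that one can assume $\sigma$ is a unit-step walk in $G$. However, the proposal has a genuine gap at exactly the point you flag, and there is a second, more subtle issue with how you set things up.

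First, the witness matching $M^\sigma_i$ you want to read off $\sigma$'s add/remove history is not well-defined. A unit step of $\sigma$ that removes $\{p,q\}$ need not correspond to an edge of any matching $\sigma$ previously built: $\sigma$ may add $\{a,b\}$, then $\{c,d\}$, then remove $\{a,c\}$. So either $M^\sigma_i$ is not a matching (and then the alternating-path/endpoint-degree argument you invoke no longer applies as stated), or you must recompute a perfect matching on $S_i$ from scratch at each step, in which case your potential $w(M^\sigma_i\triangle M'_i)$ can jump uncontrollably. Second, even granting a well-defined $M^\sigma_i$, the amortization you sketch fails in the growing case: adding $(x,y)$ to $M'$ when $(x,y)\notin M^\sigma_i$ \emph{increases} the symmetric-difference weight by $d(x,y)$, so the naive potential does not absorb the cost. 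You explicitly acknowledge this is unresolved.

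The paper fixes both problems simultaneously by dropping matchings altogether and using the transition-graph distance itself as the potential: $\phi_i=c(S_i,S'_i)$. The triangle inequality gives $c(S_i,S'_{i-1})\le c(S_{i-1},S_i)+\phi_{i-1}$ for free. For the growing step the paper develops a replacement for ``symmetric difference of matchings'': it shows that any shortest transition path $P$ from $S_i$ to $S'_{i-1}$ has a \emph{canonical} difference graph, which decomposes into short paths between pairs of $S_i\triangle S'_{i-1}$; since $|S_i|=|S'_{i-1}|+2$, one of these paths is a single edge $(p,q)$ with $p,q\in S_i\setminus S'_{i-1}$. Setting $S'_i=S'_{i-1}\cup\{p,q\}$ and deleting $(p,q)$ from the canonical difference graph yields a \emph{realisable} $(S_i,S'_i)$-difference graph, hence a bona fide transition path of length $c(S_i,S'_{i-1})-d(p,q)$, which gives $\phi_i\le c(S_i,S'_{i-1})-c(S'_{i-1},S'_i)$. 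Telescoping finishes the transition-cost bound cleanly. The missing idea in your attempt is precisely this: measure distance between \emph{states} (sets), not between auxiliary matchings, and use a structural lemma about shortest transition paths (the canonical difference-graph decomposition) to exhibit the pair $(p,q)$ whose removal certifies the potential drop.
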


\begin{proof}
	To prove Lemma~\ref{lem:any_to_monotone},  we define an online \emph{state conversion algorithm} (\SSCA) which, for every state $S_i$ in a schedule $\sigma$, produced by an \emph{arbitrary} online scheduling algorithm (OSA), produces a state $S^\prime_i$ such that the cost of the schedule $\sigma^\prime = (S^\prime_0, \ldots , S^\prime_{|\sigma|})$ is at most the cost of the original schedule $\sigma$, and $\sigma^\prime$ is monotone.

	The state conversion algorithm aims to maintain the invariant that $|S'_i| \geq |S_i|$ and the main property of a monotone schedule, which is that $S^\prime_{i-1} \subseteq S^\prime_i$ for every $i$. At a high level, it does this by adding requests to its current state when the invariant is violated that allows it to also move closer to the current state of $\sigma$. The algorithm \SSCA uses the following analog of the symmetric difference of two matchings to augment its current state in a cost-efficient manner.

 \begin{definition}
     [$(A,B)$-difference graph]
     Let $H$ be a (multi-)graph with vertex set $R$, and $A,B$ be states. The graph $H$ is an \emph{$(A,B)$-difference graph} if it is a $T$-join with $T = A \symdif B$: for every $r \in R$, the degree of $r$ in $H$ is odd if and only if $r \in A \symdif B$.
 \end{definition}
For example, one way to obtain an $(A,B)$-difference graph is by taking a perfect matching $M_A$ on $A$ and a perfect matching $M_B$ on $B$ and then taking the symmetric difference of $M_A$ and $M_B$.

\begin{definition}
    [$P$-difference graph and realisable difference graphs]
        Let $A,B$ be states and $P$ be a transition path between $A$ and $B$. The \emph{$P$-difference graph} $\diff(P)$ is a (multi-)graph on vertex set $R$ where the multiplicity of the edge $(p,q)$ is equal to the number of transitions along the path $P$ that adds or removes the $\{p,q\}$. An $(A,B)$-difference graph $H$ is \emph{realisable} if it is the $P$-difference graph for a transition path $P$ between $A,B$.
    \end{definition}
Note that if $P$ is a transition path between states $A,B$, then a $P$-difference graph is an $(A,B)$-difference graph. Moreover, if $P$ is a shortest path between $S$ and $S'$, then the total cost of the edges in $\diff(P)$ is exactly equal to $c(S,S')$. We also note that not all difference graphs are realisable\footnote{For example, consider the difference graph $H$ consisting of requests $p,q,r$ and edges $(p,q)$ and $(q,r)$. Observe that $H$ is an $(\{p,r\},\emptyset)$-difference graph but there is no transition path $P$ from $\{p,r\}$ to $\emptyset$ such that $\diff(P) = H$.}

We now characterise the structure of difference graphs that correspond to shortest paths in the transition graph. 

\begin{definition}
    [Canonical difference graphs]
    Let $H$ be an $(A,B)$-difference graph.  We say that $H$ is \emph{canonical} if it can be decomposed into a collection of $\ell := |A \symdif B|/2$ edge-disjoint paths $Q_1, \ldots, Q_\ell$ between disjoint pairs $(p_i,q_i)$ of $A \symdif B$ such that for each $i$, $Q_i$ consists of a single edge $(p_i, q_i)$ if either both $p,q$ are in $A \setminus B$ or both in $B \setminus A$, and $Q_i$ consists of two edges $(p,s), (s,q)$ for some other request $s$ if exactly one of $p,q$ is in $A \setminus B$ and the other in $B \setminus A$.
\end{definition}    

\begin{proposition}
    \label{prop:realisable}
    If $H$ is a canonical $(A,B)$-difference graph, then $H$ is realisable.
\end{proposition}

\begin{proof}
    We show by induction on $\ell = |A \symdif B|/2$ that there is a transition path $P$ from $A$ to $B$. Consider the base case $\ell = 1$. Suppose $Q_1$ consists of a single edge $(p,q)$. If $p,q \in A \setminus B$, then we can transition directly from $A$ to $B$ by removing $p,q$; otherwise, we add $p,q$. Next, suppose that $Q_1$ consists of two edges $(p,s), (s,q)$. Consider the case that $p \in A \setminus B$ and $q \in B \setminus A$. If $s \in A \cap B$, then we can transition from $A$ to $B$ by first removing $p,s$ and then adding $s,q$; otherwise we first add $s,q$ and then remove $p,s$. The case when $q \in A \setminus B$ and $p \in B \setminus A$ follows similarly. In all of these cases, we have that $H$ is exactly the $P$-difference graph where $P$ is the transition path corresponding to the sequence of transitions used.

    Suppose the statement is true for $\ell$ up to $k-1$ and let $H$ be a canonical difference graph with $k$ edge-disjoint paths. Let $A'$ be the state after executing the above procedure on $Q_1$ and $P_1$ be the transition path corresponding to the sequence of transitions from $A$ to $A'$. As argued above, $Q_1$ is the $P_1$-difference graph. Since $H \setminus Q_1$ is a canonical $(A',B)$-difference graph with $k-1$ edge-disjoint paths, we can apply induction to get a transition path $P_2$ from $A'$ to $B$ such that $H \setminus Q_1$ is the $P_2$-difference graph. By concatenating $P_1$ and $P_2$, we get a transition path $P$ from $A$ to $B$. Moreover, $H$ is the $P$-difference graph, as desired.
\end{proof}

 \begin{lemma}  
 \label{lem:canonical}
    Let $A,B$ be states. There exists a shortest path $P$ in the transition graph between $A,B$ such that $\diff(P)$ is  canonical.
 \end{lemma}

 \begin{proof}
    Since $\diff(P)$ is an $(A \symdif B)$-join, it contains a collection of $\ell := |A \symdif B|/2$ edge-disjoint paths $Q_1, \ldots, Q_\ell$ connecting disjoint pairs of requests $p_i,q_i \in A \symdif B$. 

    We now shortcut these paths to produce a  canonical $(A,B)$-difference graph $H$. In particular, for each $i$, if $p_i,q_i \in A \setminus B$ or $p_i,q_i \in B \setminus A$, add the edge $(p_i, q_i)$ to $H$; otherwise, $p_i \in A \setminus B$ and $q_i \in B \setminus A$ (or vice versa), there must be an intermediate node $s_i$ on the path $Q_i$ and we add the edges $(p_i, s_i), (s_i, q_i)$. The existence of $s_i$ is because if $Q_i$ only consists of a single edge $(p_i,q_i)$ then $p_i$ and $q_i$ must be both in $A \setminus B$ or $B \setminus A$, otherwise the transition in $P$ that adds or removes $p_i,q_i$ is invalid.

    By triangle inequality, the total cost of the edges in $H$ is at most that of $Q_1 \cup \cdots \cup Q_\ell$ which in turn is contained in $\diff(P)$. Since $H$ is a canonical $(A,B)$-difference graph, there exists a transition path $P'$ with $\diff(P') = H$. Since $c(P')$ is equal to the total cost of the edges in $\diff(P') = H$, we get the lemma.
 \end{proof}

 \begin{proposition}
    Let $A,B$ be states such that $|A| = |B|+2$. Let $P$ be a shortest path in the transition graph between $A,B$ such that $\diff(P)$ is canonical and $Q_1, \ldots, Q_\ell$ be the corresponding collection of $|A \symdif B|/2$ edge-disjoint paths connecting disjoint pairs of $A \symdif B$. Then, one such path $Q_i$ is a single edge $(p,q)$ with $p,q \in A \setminus B$.
 \end{proposition}
 This proposition follows from the fact that $|A \setminus B| = |B \setminus A| + 2$ and so there must be a path $Q_i$ connecting $p,q \in A \setminus B$. By definition of canonical difference graphs, $Q_i$ is a single edge $(p,q)$.

	We are now ready to formally define \SSCA.
\subparagraph*{Description of \SSCA} 
	Our online algorithm takes as input a sequence of states $\sigma = (S_1,\dots, S_{|\sigma|})$ produced by an online scheduling algorithm. We assume that $\sigma$ satisfies that for all $0 \leq i < |\sigma|-1$, $S_i$ and $S_{i+1}$ are neighbours in $G$. This is without loss of generality: if the input schedule does not satisfy these properties then we can add intermediate timesteps and all states on the shortest path between $S_i$ and $S_{i+1}$ such that it satisfies the above property, and the cost remains the same. When a new state $S_i$ arrives, if $|S_i| \leq |S'_{i-1}|$, the algorithm sets $S'_i = S'_{i-1}$; otherwise, it computes a shortest transition path $P$ between $S_i$ and $S'_{i-1}$  such that $\diff(P)$ is canonical, and sets $S'_i = S'_{i-1} \cup \{p, q\}$ where $p,q \in S_i \setminus S'_{i-1}$ and $(p,q)$ is a path in the path decomposition of $\diff(P)$.

Since \SSCA always transitions to a state that is a superset of its previous state, its solution $\sigma'$ is monotone. 

 Next, we analyse the transition cost of $\sigma'$. 
	\begin{lemma}
		\label{lem:TransitionCost-sizebased}
		$\sum_{i=0}^{|\sigma|-1} c(S^\prime_{i-1}, S^\prime_i) \leq \sum_{i=0}^{|\sigma|-1} c(S_{i-1}, S_i)$.
	\end{lemma}
 
	\begin{proof}
		We prove this lemma by introducing the potential $\phi_i = c(S_i, S'_i)$. 

    We claim that in each iteration $i$, the transition cost of $\sigma'$ is at most the transition cost of $\sigma$ plus the decrease in the potential.

		\begin{claim}
			\label{clm:potential}
			For all $i$, we have $c(S^\prime_{i-1}, S^\prime_i) \leq c(S_{i-1}, S_i) - (\phi_i - \phi_{i-1})$.
		\end{claim}

		\begin{proof}[Proof of Claim~\ref{clm:potential}]
            By triangle inequality, we have
            \begin{equation}
                \label{eq:potential-1}
                c(S_i, S'_{i-1}) \leq c(S_{i-1}, S_i) + c(S_{i-1}, S'_{i-1}).
            \end{equation}

            Next, we will show that
            \begin{equation}
                \label{eq:potential-2}
                c(S'_{i-1}, S'_i) \leq c(S_i, S'_{i-1}) - c(S_i, S'_i).
            \end{equation}  
            Combining these two inequalities yields the claim.
            
            Observe that Inequality~\eqref{eq:potential-2} holds when $S'_i = S'_{i-1}$. Suppose $S'_i \neq S'_{i-1}$. In this case, the algorithm computes the shortest path $P$ between $S_i$ and $S'_{i-1}$ that is canonical and set $S'_i = S'_{i-1} \cup \{p,q\}$ where $p,q \in S_i \setminus S'_{i-1}$ and $(p,q)$ is a path in the path decomposition of $\diff(P)$. We have that $\diff(P) \setminus (p,q)$ is a canonical $(S_i, S'_i)$-difference graph and thus, by Proposition~\ref{prop:realisable}, corresponds to a transition path between $S_i, S'_i$ with length $c(S_i, S'_{i-1}) - c(p,q) = c(S_i, S'_{i-1}) - c(S'_{i-1}, S'_i)$. Thus, $c(S_i, S'_i) \leq c(S_i, S'_{i-1}) - c(S'_{i-1}, S'_i)$. Rearranging this inequality yields Inequality~\eqref{eq:potential-2}. This completes the proof of the claim.
		\end{proof}
		Using Claim~\ref{clm:potential}, we determine the total transition cost incurred by $\sigma^\prime$ as follows.
		\begin{align*}
			\sum_{i=1}^{|\sigma|} c(S^\prime_{i-1}, S^\prime_i) 
            &\leq \sum_{i=1}^{|\sigma|} c(S_{i-1}, S_i) - \sum_{i=1}^{|\sigma|} (\phi_i - \phi_{i-1})\\
			&= \sum_{i=1}^{|\sigma|} c(S_{i-1}, S_i) - \phi_{|\sigma|} + \phi_{0}\\
			&\leq \sum_{i=1}^{|\sigma|} c(S_{i-1}, S_i).
		\end{align*}
		The last inequality holds because $\phi_{0} = 0$ and the potential is non-negative.
	\end{proof}

	The cost of an \MPMGADMTS solution consists of the transitions cost, as well as the \ProcessCost. Since in every iteration $i$, we have $|S'_i| \geq |S_i|$ and the processing cost of a state is a monotone non-increasing function of the size of the state, we get that the total processing cost of $\sigma'$ is at most that of $\sigma$. Together with Lemma~\ref{lem:TransitionCost-sizebased}, we get that the total cost of $\sigma'$ is at most that of $\sigma$. This concludes the proof of Lemma~\ref{lem:any_to_monotone}.
\end{proof} \subsection{Applying MTS Algorithms to \MPMGADMTS}
\label{ssec:analysis}
In this section, we prove Corollary~\ref{cor:size-based-to-match}. Consider an instance of \MPMGAD with $m$ requests in a metric space of $n$ points and the instance of \MPMGADMTS created by applying Theorem~\ref{thm:size-based-to-match}. Let $N$ be the number of states of the \MPMGADMTS instance.

There are two issues that arise when applying MTS algorithms to \MPMGADMTS directly. 
\subsubsection{Eliminating the Diameter}
The first issue is that all known MTS algorithms have a cost bound of the form $f(N) \cdot cost(OPT) + D$ where $OPT$ is the optimal MTS solution and $D$ is the diameter of the MTS state space. Observe that $D$ is at least the distance between the empty matching and the max-cost perfect matching, i.e.~the cost of the max-cost perfect matching. Unfortunately, the cost of the max-cost perfect matching can be much larger than that of the optimal solution. To overcome this, one could restrict the MTS solution to only use states whose distance from the initial state is at most $cost(OPT)$. This can be achieved by setting the costs of the other states to be infinite. This effectively reduces the diameter of the state space to at most $2 \cdot cost(OPT)$, and would give us a cost bound of $O(f(N)) \cdot cost(OPT) + 2 \cdot cost(OPT)$. 
The issue is that, since the MTS tasks arrive in an online fashion, the optimal solution remains unknown until all tasks have arrived. To address this issue we use the Guess-and-Double Method, which, maintains a guess of the value of the optimal solution as the tasks are processed. This guess is used to determine the diameter of the state space used by the algorithm. When the guess becomes too small, the value of the guess is increased and the algorithm is simulated on the input that has already been processed, only this time on the larger state space, to determine the state it would now be in, and processes the new tasks accordingly. For the benefit of the reader, we give a high level overview of the method below.
\subparagraph{The Guess-and-Double Method} Let $R = [r_1, r_2, ... r_T]$ be the sequence of tasks given by the MTS problem. Let $OPT_t$ be the cost of the optimal solution for processing the tasks that have arrived up to and including time $t$. At any time $t$ we maintain a \emph{guess} $j$ such that

\begin{align}
    \label{eq:constraint}
    2^{j-1} < OPT_t \leq 2^{j}.
\end{align}

When our guess $j$ no longer satisfies~\eqref{eq:constraint} we increase it's value (thus doubling the radius of the metric space) until it is back within our bounds. Each new guess instantiates a new \emph{phase}. At the start of each phase, we \emph{restart} the MTS algorithm with a superset of the previous state space, which now consists of all states within distance $2^{j}$ from the original start state. Note that by \emph{restart} we mean that we simulate the MTS algorithm with the new diameter on all tasks that have arrived so far, and process the new tasks in accordance with the decisions made by the algorithm with the new diameter.

Our initial guess $j$ satisfies $2^{j-1} < OPT_1 \leq 2^{j}$. We define $MTS_j$ to be the MTS algorithm that operates on the given state space with diameter $2^j$. Let $R_t = [r_1,...,r_t]$ be the sequence of tasks that have arrived up to and including timestep $t$. We define $MTS_j(R_t)$ to be the state $MTS_j$ would end up in after processing $R_t$. The Guess-and-Double method, for every timestep $t$, maintains a guess $j$ that satisfies~\eqref{eq:constraint}, and moves to $MTS_j(R_t)$. \\

\begin{algorithm}[H]
	\SetAlgoLined
    Initialise the first phase.\\
    Choose $j$ such that $2^{j-1} < OPT_1 \leq 2^{j}$.\\
    \For{Every timestep $t$}{
        \If{$2^{j-1} < OPT_t \leq 2^{j}$}{
            Move to state $MTS_j(R_t)$
        }
        \Else{
            End the previous phase and initialise a new phase.\\
            Update the value of $j$ such that $2^{j-1} < OPT_t \leq 2^{j}$.\\
            Move to state $MTS_j(R_t)$
        }
    }
	\caption{Updating the guess and splitting the tasks into phases.}
\end{algorithm}
\leavevmode\newline 

Note that each time we update the value of our guess, the value of the optimal solution has at least doubled since we made our last guess. 

To analyse the total cost of the resulting schedule we look at two separate costs. The first is the cost incurred within each phase. This includes the transition costs incurred during the phase (from moving between states), plus the cost associated with servicing all tasks that arrived during the phase in the states the algorithm moved through during the phase. We refer to this cost as the \emph{internal} phase cost.
The second cost consist of the transition costs incurred in moving between the last state of a phase $i$, and the first state of the consecutive phase $i+1$.  We refer to this cost as the \emph{external} phase cost.

We start by bounding the internal phase cost. Let $cost(MTS_j)$ denote the internal phase cost of the phase associated with guess $j$. Because $cost(MTS_j)$ can be upper bounded by the cost the algorithm would have incurred for processing all tasks that arrived prior to and during the phase associated with guess $j$, we can bound $cost(MTS_j)$ as follows:
\begin{align*}
    cost(MTS_j) \leq (O(f(N)) + 2) \cdot cost(OPT_j).
\end{align*}
Therefore, the total internal phase cost incurred over all $k$ phases is 
\begin{align*}
    \sum_{i=1}^k ALG_i &\leq (O(f(N)) + 2) \cdot \sum_{i=1}^k cost(OPT_i)\\
    & \leq 2 \cdot (O(f(N)) + 2) \cdot cost(OPT_k).
\end{align*}

Next, we bound the external phase cost. Let $S_i$ be the last state of a given phase $i$, associated with a guess $j$, and let $S_{i+1}^\prime$ be the first state of phase $i+1$, associated with the next guess $j^\prime$. We bound $c(S_i, S_{i+1}^\prime)$\footnote{Recall that we denote the transition cost of going from state $S_i$ to $S_{i+1}^\prime$ by $c(S_i, S_{i+1}^\prime)$.} as follows:

Let $S_0$ be the empty start state.
\begin{align*}
    c(S_i,S_{i+1}^\prime) \leq c(S_i,S_0) + c(S_{i+1}^\prime,S_0).
\end{align*}
For all consecutive phases $i$ and $i+1$, associated with guesses $j$ and $j^\prime$ respectively, it holds that
\begin{align*}
    c(S_i,S_0) \leq (O(f(N)) + 2) \cdot cost(OPT_i)
\end{align*}
and 
\begin{align*}
    c(S_{i+1}^\prime,S_0) \leq (O(f(N)) + 2) \cdot cost(OPT_{i+1}).
\end{align*}
We thus bound the cost over all $k-1$ phase transitions as follows
\begin{align*}
    \sum_{i = 1}^{k-1}c(S_i,S_{i+1}^\prime) &\leq 2 \cdot \sum_{i = 1}^{k-1} (O(f(N)) + 2) \cdot cost(OPT_{i+1})\\
    &\leq 4 \cdot (O(f(N)) + 2) \cdot cost(OPT_k).
\end{align*}
The total cost of the solution produced by the Guess-and-Double method can thus be bounded by $6 \cdot (O(f(N)) + 2) \cdot cost(OPT_k)$.

We can now use the $O(N)$-competitive deterministic algorithm of~\cite{originalMTS} to obtain our deterministic algorithm for MPMD-Size.

\subsubsection{The Need for an Online Embedding}
The second issue stems from the fact that the reduction in Theorem~\ref{thm:size-based-to-match} creates an MTS instance where the states are arriving over time. This is because the states correspond to matchings of requests and the requests are arriving online. This does not pose a problem for the deterministic $O(N)$-competitive Work Function Algorithm of~\cite{originalMTS}. However, we cannot directly apply the current-best randomised algorithm for MTS of~\cite{randomisedMTS} as it pre-computes a probabilistic embedding of the MTS metric space into a hierarchically separated tree (HST). Instead, we need to use a probabilistic online embedding into a HST together with the $O(\log N)$-competitive randomized algorithm for MTS on HSTs of~\cite{randomisedMTS}. Using the online embedding of~\cite{onlineEmbeddingIndyk} adds a factor of $O(\log N \log \Phi)$ where $\Phi$ is the ratio of the largest distance to the smallest distance in the MTS state space, i.e.~the aspect ratio. However, $\Phi$ can be arbitrarily large. We deal with this by proving that the Abstract Network Design Framework of~\cite{onlineEmbedding} can be extended to apply to MTS. For the benefit of the reader, we give a short overview of the Abstract Network Design Framework. 

\subparagraph{The Abstract Network Design Framework}
In an instance of abstract network design, the algorithm is given a connected graph $G(V,E)$ with edge lengths $d: E \rightarrow \mathbb{R}_+$. At each timestep $i$, a requests that consists of a set of points in the graph, called \emph{terminals}, arrive in an online fashion. The algorithm provides a response $R_i = (G_i, C_i)$, which consists of a subgraph $G_i \subseteq E$, and a connectivity list $C_i$, which is an ordered subset of terminal pairs from the terminals that have arrived so far, and determines what will become (and remain) connected from this timestep onwards. The algorithm is given, at each timestep, a feasibility function $F_i: (C_1, ..., C_i) \rightarrow \{0,1\}$ that maps a sequence of connectivity lists to either 0 (infeasible) or 1 (feasible). A solution $S_i$, which consists of a sequence of responses for every time step up to and including timestep $i$, is feasible if $F_i(C_1, ..., C_i) = 1$ and all pairs in $C_j$ are connected in all $G_i$ for all $i \geq j$. To determine the cost of a solution to the first $i$ requests $S_i$, the framework uses a load function $\rho_i: 2^{\{1,...,i\}} \rightarrow \mathbb{R}_+$, which takes as input the sequence of timesteps in which the edge was used, and outputs a corresponding multiplier to the cost of an edge $d(e)$. It aims to model how the cost of using an edge grows as the edge is used multiple times. The function must be subadditive, monotone non-decreasing, and satisfy $\rho_i(I) = 0$ if and only if $I = \emptyset$. The total cost of a solution $S_i$ is defined as follows.
\begin{align*}
    cost(S_i) = \sum_{e \in E} d(e) \cdot \rho(\{j \leq i: e \in G_j\})
\end{align*}
\subparagraph{Extending the Abstract Network Design Framework} Though we can express the transition cost of an instance of general MTS using this framework, we cannot express the cost vectors associated with processing the tasks in MTS in the current state of the framework. In order to address this issue, we propose the following alterations to generalise the framework.

We replace the feasibility function with a function $F^\prime_i: (C_1, ..., C_i) \rightarrow \mathbb{R_+}$, where $F^\prime_i(C_1, ..., C_i)$ is the processing cost of the algorithm during timestep $i$ if the solution $S_i = ((G_1, C_1),\dots,(G_i, C_i))$ is feasible, and $\infty$ otherwise.

We now re-define the cost of a solution to the first $i$ requests $S_i$ to incorporate the total processing cost incurred by the algorithm. 
\begin{align*}
    cost(S_i) = \sum_{e \in E} d(e) \cdot \rho(\{j \leq i: e \in G_j\} + \sum_{l=1}^i F^\prime(C_1, \ldots, C_l)
\end{align*}
Since the processing cost provided to the algorithm is independent of the metric space, it follows that the processing cost remains unaffected by the online embedding. It therefore does not affect the overhead due to the online embedding.

Note that the extension of this framework means it can now be used to model online problems with delay, where the delay cost can be modelled as the processing cost.

\subparagraph{Expressing MTS in the Abstract Network Design Framework} It remains to formulate the general MTS problem in the Extended Abstract Network Design Framework defined above. 

We define the terminal set of the $i$th request to be the set of states that have arrived so far. We define the cost of an edge $d((u,v))$ to be the transition cost between states $u,v$. Let $\mathbb{T}_i$ be the cost vector associated with processing task $i$, and $T_i(w)$ be the cost of processing task $i$ in state $w$. Let $v$ be the last terminal in $C_i$. The extended feasibility function $F_i$ is defined by $F_i(C_1,\ldots,C_i) = T_i(v)$ if there is only a single ordered pair in each $C_j$ for all $j \leq i$ and the sequence of $(C_1,\ldots, C_i)$ is a valid path. The load function is simply the cardinality function because we pay the transition cost associated with the edge each time we transition to a different state.  
\subparagraph{Min-operator} Bartal et al.~\cite{onlineEmbedding} show that if the problem can be captured by the Abstract Network Design Framework and admits a \emph{min-operator}, it is possible to reduce the overhead due to the online embedding to $O(\log^3 N)$. 
\begin{definition}[Min-operator]
An algorithm admits a min-operator  with factor $\mu \geq 1$ if there exists a competitive algorithm\footnote{This algorithm may be randomised.} for the problem, and for any two deterministic online algorithms $A$ and $B$\footnote{Note that these algorithms need not be competitive on all instances of the problem.}, there exists a third online deterministic algorithm $C$ such that the cost of $C$ satisfies $cost(C) \leq \mu \cdot \min\{cost(A), cost(B)\}$, where $cost(A)$ and $cost(B)$ are the respective costs of algorithms $A$ and $B$. If either algorithms $A$ or $B$ are randomised, the expected cost of $C$ must satisfy $E[cost(C)] \leq \mu \cdot \min\{E[cost(A)], E[cost(B)]\}$.
\end{definition}

We have shown that the results by Bartal et al.~\cite{onlineEmbedding} also hold for the extended Abstract Network Design Framework. Since the MTS problem in general admits a min-operator~\cite{MTS_min_operator}, using the framework of~\cite{onlineEmbedding} allows us to reduce the overhead due to the online embedding to $O(\log^3 N)$ for an overall competitive ratio of $O(\log^4 N)$. 
\begin{toappendix}
\section{A Deterministic Lower Bound for \MPMDSize} \label{subs:deterministic_lb}
In this section we lower bound the competitive ratio of any deterministic online matching algorithms for \MPMDSize.
\begin{theorem}
	 Every deterministic algorithm for \MPMDSize has competitive ratio $\Omega(n)$, where $n$ is the number of points in the metric space. 
\end{theorem}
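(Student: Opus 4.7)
The plan is to exhibit a family of adversarial \MPMDSize instances on $n$-point metric spaces on which every deterministic online algorithm pays cost $\Omega(n)$ times that of the offline optimum. Since the lower bound must also hold against clairvoyant algorithms, the size-based delay function will be fixed in advance, and all the adversarial power will come from adaptively choosing \emph{where} future requests appear based on the algorithm's committed matches.

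First, I would fix an $n$-point metric designed to separate the ``cheap internal'' and ``expensive external'' matching costs. A natural candidate is a two-cluster metric: $n/2$ points $a_1,\dots,a_{n/2}$ on one side and $n/2$ points $b_1,\dots,b_{n/2}$ on the other, with intra-cluster distance $\epsilon \leq 1/n$ and inter-cluster distance $1$. Any matching that pairs requests within a cluster costs $O(\epsilon)$ per pair, whereas any matching that pairs across the clusters costs $\Omega(1)$ per pair. The size-based delay will be the simplest possible deadline-style function: $f_t(k) = 0$ for $k$ below a small threshold (e.g.\ $k \leq 2$) and $f_t(k) = +\infty$ above it. This caps the number of simultaneously unmatched requests and forces the online algorithm to commit to pairings within a short window, without giving the offline optimum any extra flexibility beyond its future knowledge.

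Next, I would describe an adaptive adversary that releases requests over $\Omega(n)$ phases. Since the algorithm is deterministic, the adversary can simulate it and, in each phase, release the next request on whichever side of the metric will turn the algorithm's next forced commitment into an across-cluster match: whenever the size deadline forces the algorithm to commit, the single remaining pending request and the newly arrived one will lie in opposite clusters, so the algorithm pays cost $\Omega(1)$ per phase. Summing over $\Omega(n)$ phases gives $\mathrm{ALG} = \Omega(n)$. In contrast, because the adversary's strategy is fixed once the algorithm is fixed, the eventual multiset of arrival positions is known to OPT, and OPT can re-pair the requests so that almost all matches are intra-cluster (cost $\epsilon$), obtaining $\mathrm{OPT} = O(\epsilon\cdot n) = O(1)$. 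This yields the ratio $\Omega(n)$.

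The main obstacle is the inherent symmetry of size-based delay: because the delay depends only on the count of unmatched requests, both ALG and OPT are subject to the same size deadline, so OPT cannot simply ``wait longer'' to gain an advantage. All of OPT's improvement must come from re-ordering the pairings, not from changing the matching schedule. The technical heart of the argument is therefore to design the adversary's placement rule so that (i) at every forced commitment, the online algorithm's only option is to pay across the clusters, and (ii) the resulting arrival multiset still admits an (offline) matching that is almost entirely intra-cluster; once this combinatorial guarantee is in place, the competitive bound follows by summing over phases.
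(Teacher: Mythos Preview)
Your plan has a genuine gap: the two-cluster construction with a fixed small size-threshold does not force the algorithm into expensive matches. With only two clusters, whenever three (or more) requests are simultaneously pending, pigeonhole guarantees that at least two of them lie in the same cluster, so the algorithm can always execute an $\epsilon$-cost intra-cluster match and never needs to cross. If instead you lower the threshold to~$1$ so that any two pending requests must be matched immediately, you can force a cross-cluster match each time---but then OPT is bound by exactly the same constraint and has no room to re-pair, so OPT pays the same cost. Either way, condition~(i) in your ``technical heart'' cannot be met: you cannot simultaneously deny ALG a cheap option and leave OPT a cheap re-pairing.

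The paper's construction differs in two essential respects. First, it uses the \emph{uniform} $n$-point metric (all pairwise distances~$1$), so every match between requests at distinct points is expensive; there is no cheap cluster for the algorithm to exploit. Second, the size threshold is not fixed but decreases phase by phase: in phase~$i$ the delay is $0$ for at most $n-i$ pending requests and $\infty$ above. The adversary opens by placing one request at each of the $n$ points, and thereafter drops one new request per phase at a point that (from ALG's perspective) currently hosts no unmatched request and has not yet received two requests. Thus ALG never has two unmatched requests at the same point and pays~$1$ in each of $n-1$ phases. OPT, knowing in advance which two points will receive only one request, matches that pair in phase~$1$ for cost~$1$ and in every later phase matches the new arrival with the still-unmatched earlier request at the same point for cost~$0$; the decreasing threshold is precisely what gives OPT the slack to defer those local matches until their partner arrives. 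Your fixed-threshold, two-cluster instance provides neither the all-pairs-expensive geometry nor the scheduling slack that the argument requires.
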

\begin{proof}
Consider an $n$-point uniform metric space with distance 1 between all points. We fix a deterministic online matching algorithm ALG that will process a request sequence of size $2n-2$ determined by an adversary. The aim is to force ALG to match requests at two distinct points in the metric space $n-1$ times by ensuring that each time ALG needs to match two requests there is at most a single unmatched request available at each point in the metric space. We then ensure the optimal solution to the instance is to match requests at distinct points only once by placing two requests at $n-2$ points in the metric space, and placing only a single request at two points in the metric space. To this end we define the adversary to satisfy the following properties at all times:

\begin{enumerate}
	\item A new request is only ever placed at a point with no unmatched requests. 
	\item Every point in the metric space receives at most 2 requests.
\end{enumerate}

The first property of the adversary ensures that at most one unmatched request is available at any point in the metric space at all times and thereby aims to ensure ALG must match requests across two distinct points each time. The second property aims to ensure the optimal solution is to only match requests at distinct points a single time.

Before we define the behaviour of the adversary in more detail, we divide time up into phases and define the delay function in relation to each phase. The first phase starts at $t=0$ and ends when ALG performs a match. The next phase begins when the previous phase ends, and the last phase ends when ALG has matched the last request in the request sequence. The delay function is the same for all timesteps $t$ within the same phase. For the $i$th phase the delay function is defined as follows.\\

For all timesteps $t$ in phase $i$, for any subset of requests $S$,
\begin{equation}
	f_t(S) = 
	\begin{cases} 
	      0 & |S| \leq (n-i)\\
	      \infty & |S| = (n-i)+1\\
	\end{cases}
\end{equation}

We are now ready to define the behaviour of the adversary in more detail. At time $t=0$, the first phase begins and the adversary places $n$ requests, one at every point in the metric space. This satisfies all properties of the adversary and invokes an infinite delay cost for ALG, forcing the algorithm to perform a match. Since we have exactly 1 unmatched request at every point, ALG is forced to match requests at distinct points in the metric space and incur a distance cost of 1.

To define the behaviour of the adversary during the remaining phases, we first introduce the following terminology.
\begin{definition}[Saturated points]
	We call a point in the metric space \emph{saturated} if it has received at least 2 requests so far, and \emph{unsaturated} otherwise. 
\end{definition}

\begin{definition}[Active points]
	We call a point in the metric space \emph{active} if it currently hosts an unmatched request, and \emph{inactive} otherwise. 
\end{definition}

For all $i \in \{2... n-1\}$, at the start of phase $i$, the adversary will place a request at an inactive unsaturated point. By definition, this satisfies the properties of the adversary and thereby forces ALG to match two distinct points in the metric space to avoid infinite delay.

Before we analyse the competitive ratio we need to show that the adversary is well-defined. To this end we prove that, regardless of the behaviour of ALG, at the end of each phase $i \in \{1... n-2\}$, after ALG has performed a match, there always exists an inactive unsaturated point that the adversary can place a request on. We formalise this in the following claim.

\begin{claim}
	\label{clm:adversary-properties}
	At the end of each phase $i$ (after ALG has matched), there exist $i+1$ inactive points and at least two of these points are unsaturated. 
\end{claim}

\begin{proof}[Proof of Claim~\ref{clm:adversary-properties}]
We prove the claim by induction on the number of phases $i$. 

The base case is when $i = 1$. At the start of phase 1, every point in the metric space receives a single request. At the end of phase 1, ALG is forced to match 2 arbitrary requests at different points in the metric space. Therefore, at the end of phase 1, there will be 2 points in the metric space that are both inactive and unsaturated. 

Assume the claim holds for all phases $i = 1$ up to $i = k$. By this assumption, at the start of phase $k+1$ we have $k+1$ points in the metric space that are inactive and at least two of these requests are unsaturated. The adversary can now place a new requests at one of the two unsaturated points, satisfying its current properties, and leaving $k$ points inactive (one of which is still unsaturated). At the end of phase $k+1$, ALG will be forced to match requests across two distinct points in the metric space again, adding another 2 points to the set of inactive points. By the end of phase $k+1$, there are thus a total of $k+2$ inactive points and at least one of them is unsaturated. It now remains to prove at least two of the inactive points must be unsaturated. Assume for the sake of contradiction that this is not the case. Then all remaining $k+1$ inactive points must be saturated. Since the points are inactive and saturated this means that ALG has matched at least $2\cdot(k+1)$ points in $k+1$ phases. In every phase ALG matches exactly 2 points. By the end of phase $k+1$, ALG can thus have matched at most $2\cdot(k+1)$ requests. But the $k+1$ saturated and 1 unsaturated points imply at least $2\cdot(k+1)+1$ requests must have been matched by the end of this phase. This constitutes a contradiction. We conclude that there are $k+2$ inactive points by the end of iteration $k+1$ and that at least $2$ of them are unsaturated. By the principle of induction, Claim~\ref{clm:adversary-properties} holds.
\end{proof}

We conclude the adversary is well-defined. It remains to analyse the competitive ratio. Because the adversary maintained the first property, it follows that each point in the metric space hosts at most one unmatched request at any time. From this we conclude that ALG incurred a distance cost of 1 during each phase, resulting in a total distance cost of $n-1$. Furthermore, because the adversary maintained the second property and a total of $2n-2$ request arrived at $n$ points in the metric space, it follows that two requests arrived at $n-2$ points in the metric space and one request arrived at two points in the metric space. Let us refer to the latter two points as $v_1$ and $v_n$. The optimal solution is to match the requests at $v_1$ and $v_n$ in the first phase and in each consecutive phase, to match the two requests at the same point. Since neither ALG nor OPT incurred any delay cost, the total cost of ALG and OPT are as follows:
\begin{align*}
	&Cost(OPT) = 1\\
	&Cost(ALG) = n-1
\end{align*}
From this we conclude our lower bound on the competitive ratio for any deterministic online matching algorithm on \MPMDSize.
\end{proof}
\end{toappendix} 

\begin{toappendix}
\section{A Randomised Lower Bound for \MPMGAD with Size-based Delay}
In this section, we lower bound the competitive ratio of any randomised online matching algorithm for \MPMDSize.
\begin{theorem}
	\label{thm:randomised_lb}
	Every randomised algorithm for \MPMDSize has competitive ratio $\Omega(\log n)$, where $n$ is the number of points in the metric space.
\end{theorem}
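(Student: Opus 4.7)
The plan is to apply Yao's minimax principle: it suffices to exhibit a distribution $\mathcal{D}$ over input instances on a uniform $n$-point metric and show that every deterministic algorithm has expected cost at least $\Omega(\log n)$ times $\mathbb{E}_{\mathcal{D}}[\cost(\mathrm{OPT})]$. This reduces the randomised lower bound to an average-case analysis against a deterministic opponent.

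Concretely, I would work on the uniform metric on $n$ points with pairwise distance $1$ and use size-based delay functions that behave as adaptive deadlines: the delay is $0$ while the number of pending requests is at most a scheduled threshold and $\infty$ otherwise. This mirrors the setup of the deterministic lower bound of Theorem~\ref{thm:det_lb_size-based}, but with the key twist that whenever the adversary needs to release a ``forcing'' request it does so at a uniformly random inactive unsaturated point. Since the random choice is made by the adversary, OPT (acting on the realised instance with full knowledge) can always match the forcing request to the existing unmatched request sitting on the same point, so that $\mathbb{E}_{\mathcal{D}}[\cost(\mathrm{OPT})] = O(1)$.

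To obtain the $\log n$ amplification, I would organise the sequence hierarchically into $\Theta(\log n)$ levels, rather than into $n-1$ essentially independent phases. At level $\ell$ the adversary effectively partitions the still-active points into $2^\ell$ blocks, uses random bits to decide which block hosts the next forcing request, and selects the deadline threshold so that any deterministic algorithm is forced to commit to a matching pair before it can distinguish the blocks. An entropy/averaging argument, analogous to the classical randomised lower bound for metrical task systems on the uniform metric (and consistent with the MTS viewpoint developed in the reduction of Theorem~\ref{thm:size-based-to-match}), then shows that any deterministic algorithm must pay $\Omega(1)$ in expectation at each level, summing to a total expected algorithm cost of $\Omega(\log n)$.

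The main obstacle is tuning the deadline thresholds and level lengths to satisfy two competing requirements simultaneously. On the one hand, the thresholds must force the algorithm to commit to a matching at each level, preventing it from simply stalling to accumulate information --- a real danger because size-based delay functions are extremely flexible and the algorithm can always trade extra pending requests for more knowledge. On the other hand, OPT must retain the ability to match locally on \emph{every} realisation of the random bits, preserving the $O(1)$ bound on $\mathbb{E}[\cost(\mathrm{OPT})]$. Ensuring that the $\Omega(1)$ charges from different levels do not overlap --- so they can be summed into $\Omega(\log n)$ --- is the main technical burden and would likely proceed via a potential argument over the randomness still hidden from the algorithm at each point in time.
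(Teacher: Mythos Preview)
Your proposal has a genuine gap and an unnecessary complication.

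\textbf{The gap.} For Yao's principle you need a distribution over inputs that is fixed independently of the algorithm. But you repeatedly let the adversary condition on the algorithm's state: you place forcing requests at ``inactive unsaturated'' points and later partition the ``still-active'' points into blocks. Whether a point is active or inactive depends on which matches ALG has already made, so this is not an oblivious distribution. Relatedly, your claim that OPT can always match the forcing request to ``the existing unmatched request sitting on the same point'' is incoherent: an \emph{inactive} point by definition carries no unmatched request, and in any case OPT's pending set need not coincide with ALG's.

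\textbf{The overcomplication.} The hierarchical $\Theta(\log n)$-level construction is not needed, and you yourself flag that synchronising the thresholds across levels while keeping OPT at $O(1)$ is the ``main technical burden'' you have not resolved. The paper's proof avoids all of this. It keeps the $n-1$ phases of the deterministic construction, but in phase $i$ the adversary places the new request at a \emph{uniformly random unsaturated} point --- a notion that depends only on the request history, not on ALG. One then shows that at the end of phase $i$ there are exactly $n-i+1$ unsaturated points and at least two of them are inactive for ALG, so with probability at least $2/(n-i+1)$ ALG is forced to match across and pay $1$. Summing gives $\sum_{i=1}^{n-1} 2/(n-i+1)=\Omega(\log n)$. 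Since in every realisation exactly two points receive a single request and the remaining $n-2$ receive two, OPT matches the two singletons to each other and everything else locally, paying exactly $1$. The $\log n$ thus comes from a harmonic sum over $n-1$ phases, not from a recursive $\log n$-level scheme.
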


\begin{proof}
Consider an $n$-point uniform metric space with distance 1 between all points. Applying Yao’s principle~\cite{YaoMinimax}, we define a uniform random distribution over the inputs such that any deterministic online matching algorithm will have expected cost $\Omega(\log n)$. We define the behaviour of the adversary to ensure the expected cost of the optimal solution is always 1. To this end we define the adversary, which will place $2n-2$ requests, to satisfy the following property at all times:

\begin{itemize}
	\item Every point in the metric space receives at most 2 requests.
\end{itemize}

This property ensures the optimal solution is to only match requests at distinct points a single time, regardless of where the requests are placed in the metric space.

Similar to the previous section, we divide time up into phases and define the delay function in relation to each phase. The first phase starts at $t=0$ and ends when ALG performs a match. The next phase begins when the previous phase ends, and the last phase ends when ALG has matched the last request in the request sequence. The delay function is the same for all timesteps $t$ within the same phase. For the $i$th phase the delay function is defined as follows.\\

For all timesteps $t$ in phase $i$, for any subset of requests $S$,
\begin{equation}
	f_t(S) = 
	\begin{cases} 
	      0 & |S| \leq (n-i)\\
	      \infty & |S| = (n-i)+1\\
	\end{cases}
\end{equation}

We now define the behaviour of the adversary in more detail. At time $t=0$, the first phase begins and the adversary places $n$ requests, one at every point in the metric space. This satisfies all properties of the adversary and invokes an infinite delay cost for ALG, forcing the algorithm to perform a match. Since we have exactly 1 unmatched request at every point, ALG is forced to match two requests at distinct points in the metric space, and incur a distance cost of 1.
In order to define the behaviour of the adversary on the remaining phases, recall from Section~\ref{subs:deterministic_lb} that we define a point to be \emph{saturated} if it has received 2 requests (and \emph{unsaturated} otherwise) and \emph{active} if it hosts an unmatched request (and \emph{inactive} otherwise). 

For all $i \in \{2... n-1\}$, at the start of phase $i$, the adversary will drop a request uniformly at random at any of the \emph{unsaturated} points. By definition, this satisfies the adversary property. 

We observe that the expected cost of any deterministic algorithm ALG on the input sequence described above is lower bounded by the expected number of phases in which it is forced to match two requests at distinct points in the metric space.

We thus wish to compute, for each phase $i$, a lower bound on the probability that ALG will have to match two distinct points and thereby incur a distance cost of 1. We note that if the adversary drops a request on an \emph{active} unsaturated point, ALG can match the two requests at that point without incurring any distance cost. We conclude that ALG will only be \emph{forced} to match requests at two distinct points in the metric space if the adversary places the request on an \emph{inactive} unsaturated point. 
It would thus be helpful to consider the number of unsaturated points the adversary can place a request at during phase $i$, as well as how many of these are inactive. 
To this end we state the following observation followed by a claim regarding the number of unsaturated points, as well as the number of inactive unsaturated points at the end of phase $i$. (Note that Claim~\ref{clm:adversary-properties-rando} is the same claim as we made in Section~\ref{subs:deterministic_lb}, which requires a new proof due to the change in behaviour of the adversary).

\begin{observation}
    \label{obs:nr_unsaturated}
At the end of every phase $i$ (after ALG has matched), there exist $n-i+1$ unsaturated points. 
\end{observation}
\begin{claim}
	\label{clm:adversary-properties-rando}
At the end of every phase $i$ (after ALG has matched), there exist $i+1$ inactive points and at least two of these points are unsaturated. 
\end{claim}

\begin{proof}[Proof of Claim~\ref{clm:adversary-properties-rando}]
We prove the claim by induction on the number of phases $i$. 

The base case is when $i = 1$. At the start of phase 1, every point in the metric space receives a single request. At the end of phase 1, ALG is forced to match 2 arbitrary requests at different points in the metric space. Therefore, at the end of phase 1, there will be 2 points in the metric space that are both inactive and unsaturated. 

Assume the claim holds for all phases $i = 1$ up to $i = k$. Thus, at the start of phase $k+1$, we have $k+1$ points in the metric space that are inactive and at least two of these requests are unsaturated. The adversary can now place a new requests at any of the unsaturated points (both active and inactive). 

If it places it on an active unsaturated point, ALG is able to match the two active requests at the same point and incur no distance cost. However, the point will then become inactive (while it was previously active) and thus add 1 to the count of inactive points. This results in $k+2$ inactive points. Furthermore, since the adversary did not place a request at any of the inactive unsaturated points, we still have at least 2 inactive unsaturated points. We conclude that by the end of phase $k+1$ we have $k+2$ inactive points and at least two of them must be unsaturated. 

On the other hand, if the adversary manages to place the request on an inactive unsaturated point, this leaves $k$ points inactive (one of which is still unsaturated). At the end of phase $k+1$, ALG will be forced to match requests across two distinct points in the metric space again, adding another 2 points to the set of inactive points. By the end of phase $k+1$, there are thus a total of $k+2$ inactive points and at least one of them is unsaturated. It now remains to prove at least two of the inactive points must be unsaturated. Assume for the sake of contradiction that this is not the case. Then all remaining $k+1$ inactive points must be saturated. Since the points are inactive and saturated this means that ALG has matched at least $2\cdot(k+1)$ points in $k+1$ phases. In every phase ALG matches exactly 2 points. By the end of phase $k+1$, ALG can thus have matched at most $2\cdot(k+1)$ requests. But the $k+1$ saturated and 1 unsaturated points imply at least $2\cdot(k+1)+1$ requests must have been matched by the end of this phase. This constitutes a contradiction. We conclude that there are $k+2$ inactive points by the end of iteration $k+1$ and that at least $2$ of them are unsaturated. By the principle of induction, Claim~\ref{clm:adversary-properties-rando} holds.
\end{proof}

The claim implies that at the end of phase $i$, there exist $n-i+1$ unsaturated points which the adversary can place a new request at, and at least two of those points are inactive. If the adversary manages to place a request at one of the unsaturated inactive points, this will force ALG to match two distinct points in the metric space and hence incur a distance cost. We define the random variable $b_i$ to be the number of inactive points among the unsaturated points. Conditioned on $b_i$, the probability that ALG will need to match across in phase $i$ can be expressed as $\frac{b_i}{n-i+1}$. We now use this to lower bound the total expected cost of ALG. Let $X_i$ be the indicator variable that the algorithm pays a distance cost of 1 to match requests at distinct points in phase $i$.

\begin{align*}
	E[cost(Alg)] &= \sum_{i=1}^{n-1} E[X_i]\\
    &= \sum_{i=1}^{n-1} E[E[X_i | b_i]]\\
	&\geq \sum_{i=1}^{n-1} \frac{E[b_i]}{n-i+1}\\
	&\geq \sum_{i=1}^{n-1} \frac{2}{n-i+1}\\
	&= 2\sum_{k=2}^{n} \frac{1}{k}\\
	&= \Omega(\log n).
\end{align*}
The second inequality follows from Claim~\ref{clm:adversary-properties-rando}, and the last from harmonic series. 

Because the adversary only places at most 2 requests at each of the points and places $2n-2$ requests in total for all inputs in the distribution, this will result in $n-2$ points that receive exactly 2 requests and 2 points that receive exactly 1 requests. The optimal offline solution is to match the single requests at the two points that received a single request, and to match all other requests locally at the same point. The expected cost of the optimal solution over the distribution of inputs is thus:
\begin{align*}
	E[cost(OPT)] = 1.
\end{align*}

From this we conclude the correctness of Theorem~\ref{thm:randomised_lb}.
\end{proof}
\end{toappendix}  

\section{Online Matching with Concave Delay}
    \label{sec:concave}
In this section we aim to demonstrate that the framework underlying some existing techniques, applied by Bienkowski et al.~\cite{pd}, can be modified to design deterministic competitive algorithms for MPMD with concave delays. Furthermore, our techniques work for any delay function that can extend a metric space into a time-augmented metric space (see Section~\ref{sec:notation} for details on a time-augmented metric space).

\subsection{Notations and Preliminaries}
\label{sec:notation}
In this section we introduce basic notation that will be used throughout the chapter. We then explain the (offline) moat-growing framework, introduced by Goemans and Williamson~\cite{forest}, and address how this framework can be moved to the online setting with delay. We also explain the concept of a time-augmented metric space, and extend it by introducing the concept of a concave time-augmented metric space. 

Recall that we define MPMD on a metric space $(V,\dist)$ where $V$ is a set of $n$ points. A sequence $R$ of $m$ requests arrive over time in the metric space. We assume without loss of generality that the requests arrive at unique points in the metric space. 
In this section, we partially follow the notation of Bienkowski et al.~\cite{pd}. Let $R_t$ be the set of requests that have arrived in the metric space up to time $t$, and $S \subseteq R_t$ be a subset of these requests. Let $\Sur(S)$ denote the \emph{surplus} of $S$, which is equal to 1 if the cardinality of the set $S$ is odd, and 0 otherwise. Furthermore, let $\delta(S) = \{(u,v): u \in S, v \notin S\}$ be the cut of $S$. For each request $u$, we denote its position in the metric space as $\pos(u)$, and its arrival time as $\atime(u)$. 
Let $E = {R \choose 2}$ be the set of all possible undirected pairs $u,v \in R$. Each pair $u,v \in R$ is represented by an edge $e = (u,v)$, $e\in E$. 

If the reader is familiar with the Moat-Growing Framework by Goemans and Williamson~\cite{forest}, they can skip to Section~\ref{par:introDelay}.

\subsubsection{The Moat-Growing Framework}
For the benefit of the reader, we give a short high-level overview of the moat-growing framework, by Goemans and Williamson~\cite{forest},
upon which we base our algorithm and analysis. We explain the framework as applied to the offline minimum cost perfect matching problem in~\cite{forest}. Define the metric min-cost perfect matching problem on a graph $G = (R,E)$ whose edge costs, defined by $c: E \rightarrow \mathbb{Q}_+$, satisfy triangle inequality. Define a function $f: 2^{R} \rightarrow \{0,1\}$, where f(S) = 1 if the cardinality of the set is odd, and 0 otherwise. The problem is first formulated as an integer program (IP), where the objective function is a minimisation function of the total cost associated with the chosen edges. A constraint, representing a connectivity requirement, is associated with each subset of vertices. The IP is as follows:
\begin{alignat*}{2}
  & \text{minimize: } & & \sum_{e\in E} c_e \cdot x_e \\
   & \text{subject to: }& \quad & \sum_{\mathclap{{e \in \delta(S)}}}
                \begin{aligned}[t]
                    x_e & \geq f(S), &\qquad \quad \forall \emptyset \neq S \subset R\\[3ex]
                  x_{e} & \in \{0,1\}, &\qquad \quad \forall e \in E
                \end{aligned}
\end{alignat*}
Relaxing the IP into a linear program (LP) gives us a lower bound on the cost of the optimal solution to the IP. By weak duality, any dual solution lower bounds the primal solution, and hence, the optimal primal IP solution. The dual of the LP is as follows:
\begin{alignat*}{2}
  & \text{maximise: } & & \sum_{S \subseteq R} y_S \cdot f(S) \\
   & \text{subject to: }& \quad & \sum_{\mathclap{S:e \in \delta(S)}}
                \begin{aligned}[t]
                    y_S & \leq c_e, &\qquad \quad \forall e \in E\\[3ex]
                  y_S & \geq 0,	&\qquad \quad \forall \emptyset \neq S \subset R
                \end{aligned}
\end{alignat*}
The algorithm maintains a family of active sets, corresponding to primal constraints not yet satisfied. For the sake of the analysis, the algorithm maintains a feasible dual solution while constructing a primal solution to the IP. It does so by continuously increasing the dual variables associated with the active sets until at least one of the constraints is \emph{tight}. A constraints is said to be \emph{tight} if it holds with equality. A dual variable associated with a set of requests $S$ can be visualised as a moat surrounding the set in the metric space. As we increase the dual variables, the moat surrounding the set grows. When a dual constraint becomes tight it can be visualised as the boundary of two moats around the sets $S_1$ and $S_2$ touching and, consequently, the accumulation of all moats around the sets containing the two endpoints will have covered the edge (this process is visualised in Figure~\ref{fig:GrowingDual}). When this happens, the algorithm adds the corresponding edge to the primal IP solution, deactivates the two sets, and, if $f(S_1 \cup S_2) = 1$, activates the set that is the union of the previous two sets $S_3 = S_1 \cup S_2$. By deactivating the previous two sets, the algorithm avoids violating any dual constraints. The primal-dual part of the algorithm terminates when all primal constraints have been satisfied, resulting in a feasible primal IP solution. 
\begin{figure}[H]
	\begin{center}
		\includegraphics[scale=0.3]{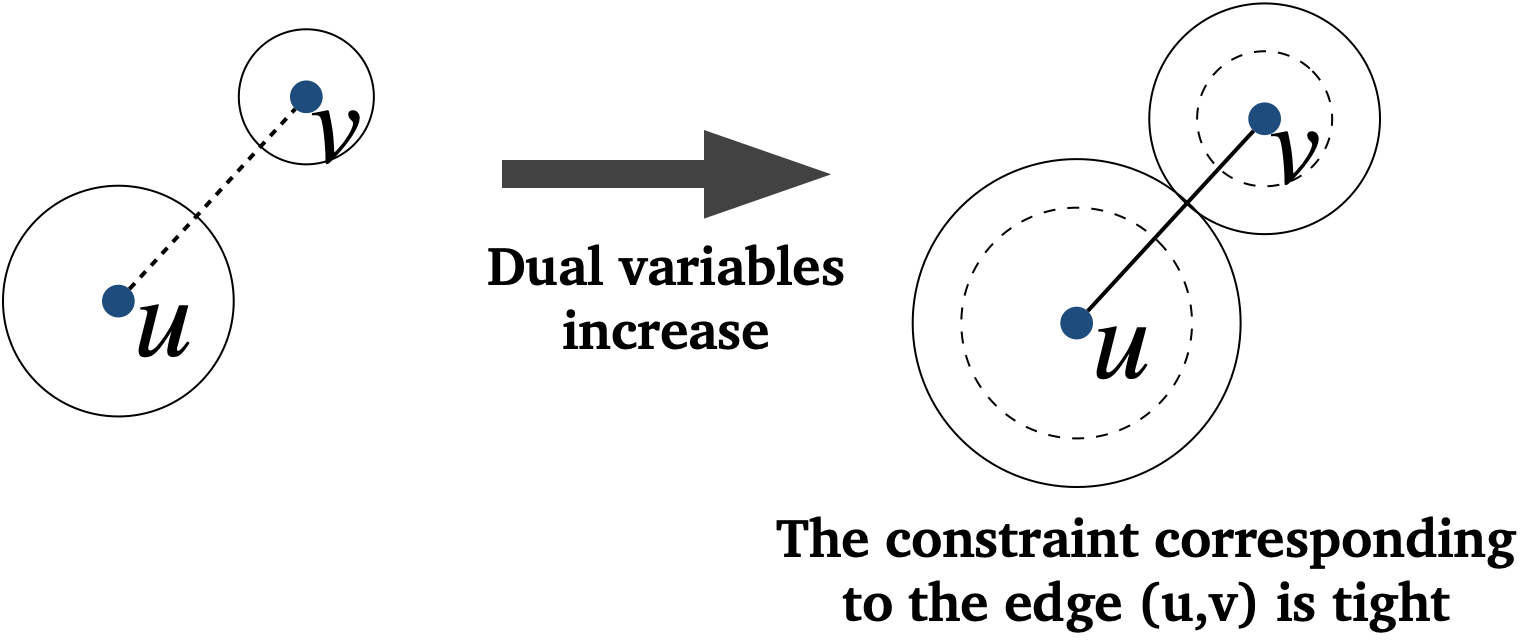}
		\caption{A visualisation of a tight constraint event.}
		\label{fig:GrowingDual}
	\end{center}
\end{figure}
The IP solution is now a forest, where every connected component consists of an even number of requests and the degree of every vertex inside each component is odd. For each connected component, the algorithm then performs a Eulerian Tour of the component and skips each vertex that has already been visited. The resulting cycle is at most twice the cost of the original component. As the cycle is now the union of two perfect matchings of the requests inside the component, the algorithm chooses the one with the smallest cost. The resulting matching will have cost at most half the cost of the Eulerian tour, which has cost at most twice that of the component, whose cost is paid for by the dual solution.

Upon termination of the algorithm, since any feasible dual solution can be seen as a lower bound to the cost of an optimal primal LP solution, and the primal LP solution is a lower bound to the optimal IP solution, one can bound the competitive ratio of the algorithm by relating the resulting IP solution to the value of the dual solution produced. 

\paragraph{Introducing Delay to the Framework}
\label{par:introDelay}When introducing uniform linear delay to the problem, the progression of time becomes important and the framework requires a way to account for the accumulation of delay cost. To this end, one can define an instance where the cost of an edge is the sum of the distance between the requests in the original graph plus the delay cost the first request would incur from the time it arrives until the second request arrives. This only works because an optimal offline solution will always match a pair of requests upon the arrival of the second request. By defining the instance in this way, the underlying metric space incorporates both the distance cost in the original metric space as well as the delay cost incurred by a matched pair of requests. The approach can be interpreted as moving the problem to a \emph{time-augmented metric space}. This idea is used implicitly, not only by Bienkowski et al.~\cite{pd}, but also in all other deterministic competitive algorithms for MPMD with uniform linear delay, such as~\cite{hemispheres} (who explicitly use a time-augmented metric space) and \cite{spheres}.

\begin{definition}[Time-augmented metric space]
Given a metric space $D = (W,\dist)$ we define the \emph{time-augmented metric space} as $D_T = (W \times \mathbb{R},\timedist)$ where $\timedist$ is a distance function defined as 
\begin{align*}
	\timedist(u,v) = \dist(\pos(u), \pos(v)) + |\atime(u) - \atime(v)|.
\end{align*}
Where $u,v \in W \times \mathbb{R}$. In other words, a time-augmented metric space is a metric that is the Cartesian product of a metric space $D$ and the time axis, such that the distance between two points in the original metric space is defined as the difference in position in the original metric space, in addition to the difference in arrival times.
\end{definition}
One can easily verify that a time-augmented metric space, by the above definition, is indeed a valid metric space that satisfies triangle inequality. 

\paragraph{Moving the Framework to the Online Setting}Finally, to move this framework to the online setting, more challenges need to be overcome. 

Firstly, in the online setting, decisions cannot be revoked and hence, the final pruning of the connected component into a valid matching is no longer a possibility. In response to this issue, Bienkowski et al.~\cite{pd} only \emph{mark} the edges corresponding to a tight constraint and, in the event of a tight constraint, adds to the IP solution an edge between free requests in the connected component of marked edges. the marked paths in the tree are then used to upper bound the distance cost (in the time-augmented metric space) of the match between the two endpoints in the tree 

Secondly, due to the online nature of the algorithm, one cannot assume two requests are matched upon the arrival of the second and hence, more delay cost can be incurred. This additional delay cost is not accounted for by the time-augmented metric space. By growing the moats at the rate of delay accumulation, Bienkowski et al~\cite{pd} manage to use the value of the dual to bound the additional delay cost incurred as well.

\paragraph{Introducing Concave Delay to the Framework}
As we consider the problem with a \emph{concave} delay cost function nevertheless, we need to introduce the concept of a \emph{concave time-augmented metric space}. 

\begin{definition}[Concave Time-Augmented Metric Space]
Given a metric space $D = (W,\dist)$ we define the \emph{concave time-augmented metric space} as $D_{CT} = (W \times \mathbb{R},\timedistc)$ where $\timedistc: 2^R \rightarrow \mathbb{R_+}$ is a distance function defined as 
\begin{align*}
	\timedistc(u,v) = \dist(\pos(u), \pos(v)) + f(|\atime(u) - \atime(v)|).
\end{align*}
Where $u,v \in W \times \mathbb{R}$ and $f$ defines a concave monotone non-decreasing function. In other words, a concave time-augmented metric space is a metric that is the Cartesian product of a metric space $D$ and a the time axis, such that the distance between two points is defined as the difference in position in addition to a concave function of the difference in arrival time of the points. 
\end{definition}
It remains to prove that the resulting space is a valid metric space. To this end we show that the new distance function satisfies triangle inequality. 
\begin{claim}
    \label{clm:triangle}
    For all $u,v,w \in R$, $\timedistc(u,v) \leq \timedistc(u,w) + \timedistc(w,v)$. In other words, the function $\timedistc$ satisfies triangle inequality.
\end{claim}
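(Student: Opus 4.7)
The plan is to split $\timedistc(u,v)$ into its two summands and verify triangle inequality for each part separately. For the spatial part $\dist(\pos(u),\pos(v))$, triangle inequality is immediate since $\dist$ is a metric on $W$. So the only real work is to establish
\[
f\bigl(|\atime(u)-\atime(v)|\bigr) \;\leq\; f\bigl(|\atime(u)-\atime(w)|\bigr) + f\bigl(|\atime(w)-\atime(v)|\bigr).
\]

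To prove this, first I would use the ordinary triangle inequality for the absolute value, $|\atime(u)-\atime(v)| \leq |\atime(u)-\atime(w)| + |\atime(w)-\atime(v)|$, together with the fact that $f$ is non-decreasing, to reduce the claim to showing $f(x+y) \leq f(x)+f(y)$ for all $x,y \geq 0$; that is, subadditivity of $f$.

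The main (and only nontrivial) obstacle is therefore to show that a concave, non-decreasing $f$ with $f(0)=0$ is subadditive. This is a standard fact which I would prove as follows. Fix $x,y \geq 0$, not both zero, and let $t = x/(x+y) \in [0,1]$. By concavity applied to the points $0$ and $x+y$,
\[
f(x) = f\bigl(t(x+y) + (1-t)\cdot 0\bigr) \geq t\, f(x+y) + (1-t)\, f(0) = \tfrac{x}{x+y}\, f(x+y),
\]
and symmetrically $f(y) \geq \tfrac{y}{x+y}\, f(x+y)$. Adding these two inequalities yields $f(x)+f(y) \geq f(x+y)$, which is the desired subadditivity.

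Combining the spatial triangle inequality, the monotonicity step, and subadditivity gives
\[
\timedistc(u,v) \leq \dist(\pos(u),\pos(w)) + \dist(\pos(w),\pos(v)) + f(|\atime(u)-\atime(w)|) + f(|\atime(w)-\atime(v)|),
\]
which is exactly $\timedistc(u,w) + \timedistc(w,v)$, proving Claim~\ref{clm:triangle}. (I would also briefly note that the implicit assumption $f(0)=0$ is natural here, since the delay cost at the moment of arrival is zero; this also ensures $\timedistc(u,u)=0$, so $\timedistc$ is indeed a metric.)
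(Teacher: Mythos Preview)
Your proof is correct and follows essentially the same approach as the paper: both decompose $\timedistc$ into its spatial and temporal parts, invoke the triangle inequality for $\dist$, and use subadditivity of $f$ on the temporal part. Your version is simply more explicit---you spell out the intermediate step using monotonicity of $f$ and the triangle inequality for $|\cdot|$, and you actually prove subadditivity from concavity and $f(0)=0$, whereas the paper just asserts ``from the subadditivity of a concave function'' without further justification.
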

\begin{proof}
    Let $f$ be the concave function associated with $\timedistc$. From the subadditivity of a concave function, we deduce that $f$ satisfies that for all $u,v,w \in R$,
    \begin{align}
    \label{eq:subadditive}
        f(|\atime(u) - \atime(v)|) \leq f(|\atime(u) - \atime(w)|) + f(|\atime(w) - \atime(v)|).
    \end{align}
    Since $D$ is a valid metric space, it follows that $\dist$ satisfies triangle inequality. Therefore, we have
    \begin{align}
    \label{eq:triangle}
        \dist(\pos(u), \pos(v)) \leq \dist(\pos(u), \pos(w)) + \dist(\pos(w), \pos(v)).
    \end{align}
    From equations~\eqref{eq:subadditive} and~\eqref{eq:triangle} we conclude that
    \begin{align*}
        \timedistc(u,v) \leq \timedistc(u,w) + \timedistc(w,v).
    \end{align*}
\end{proof}
Solving the offline MPMD with concave delays problem in the metric space $D$ is equivalent to solving the offline min-cost perfect matching problem without delay in $D_{CT}$. This equivalence holds due to the fact that an optimal offline solution will always match a pair of requests upon the arrival of the second request. This, however, does not hold for the online version of the problem, where an additional delay cost may be incurred after the arrival of the second request.

If the reader is familiar with the Primal-Dual algorithm by Bienkowski et al.~\cite{pd}, they can skip to Section~\ref{ssec:ConcaveAlgorithm}.

\subsubsection{An Existing Primal-Dual Algorithm for MPMD with Linear Delays}Bienkowski et al.~\cite{pd} use the moat-growing framework above to design a linear competitive primal-dual algorithm for the problem with linear delay. For the benefit of the reader, we give a brief outline of their algorithm below. 

The algorithm maintains a family of \emph{active} sets such that, at any time, a request belongs to a single active set. They define a set to be \emph{growing} if it is both active and contains an odd number of requests. When a set is both active and growing, the corresponding dual variable grows at the rate of delay accumulation of the requests inside the set. Visually, this can be interpreted as growing moats around the requests in a time-augmented metric space. Because the delay functions are uniform linear, all requests accumulate delay at the same rate at all times, regardless of their arrival time. When a constraint becomes tight, the event can be visualised as the boundary of two moats around the sets $S_1$ and $S_2$ touching and, consequently, the accumulation of all moats around the sets containing the two endpoints will have covered the edge. When this happens, the algorithm marks the edge and performs a maximum cardinality matching between the unmatched requests in the active sets containing the endpoints. 

Both the proof of correctness and the analysis of~\cite{pd} heavily relies on the fact that all requests accumulate delay at the same rate at all times.

\subsection{An Algorithm for MPMD with Concave Delay}
\label{ssec:ConcaveAlgorithm}
In this subsection we briefly address the challenges involved with using the moat-growing framework to design competitive online algorithms for the problem with concave delays. We then in introduce the integer and linear programs used, and formally describe our algorithm. To prove Theorem~\ref{thm:concave} we give a proof of correctness and competitive analysis of the algorithm. 

\paragraph{Challenges in Moving from Linear to Concave Delay}In Bienkowski et al.'s approach to the problem with uniform linear delay~\cite{pd}, the rate of delay cost accumulation at any time is the same for all request (due to the properties of uniform linear delay). Consequently, the rate of growth of the dual variable associated with any set of requests is well-defined. However, when the uniform delay functions are concave, at any time, requests can accumulate delay cost at different rates (depending on their arrival times). The rate of growth of the dual variable of a set of requests is thus no longer well-defined. The main challenge is to grow the dual variables at a rate such that the dual solution remains feasible, while ensuring one can bound the competitive ratio of the algorithm in terms of the value of the dual solution produced. 

To achieve this, we grow the dual variable associated with a set $S$ each iteration by the maximum possible amount such that, if a request $v$ arrived at the same position in the original metric space (i.e. $\pos(v) = \pos(u)$) for any $u \in S$, the dual constraint associated with the new edge $(u,v)$ would be instantly tight, but not violated. This allows us to ensure no dual constraints are violated throughout the execution of the algorithm. Furthermore, at all times, for at least one request in the set, the sum of dual values, associated with sets that contain the request, is equal to the total delay this request would have incurred if left unmatched. In the analysis of our algorithm, we show that the additional online delay cost incurred by any request inside the set can therefore be upper bounded by the total growth of the dual variables.

\subsubsection{Primal-Dual Formulation}
We adapt the IP, LP and DP from the moat-growing framework to the MPMD with concave delay setting below. We abuse notation and use $R$ to denote both the set of requests in the original metric space and the corresponding points in the concave time-augmented metric space. We also use $u$ to denote both the request and the corresponding point in the concave time-augmented metric space. Similar to Bienkowski et al., let the function $f$ be the \emph{surplus} of a set $S$ of requests. Recall that the surplus of a set of requests is equal to 1 if the cardinality of the set S is odd, and 0 otherwise\footnote{Note that $\Sur(S) = 1$ indicates that at least 1 edge must leave the set in order to have a perfect matching of all requests.}. Furthermore, we define the cost function $c$ associated with an edge $(u,v)$ to be the distance between its two endpoints in the concave time-augmented metric space (i.e. $\timedistc(u,v))$. Similar to Bienkowski et al. we denote this cost as $\optcost((u,v))$.\\

Integer program (IP):
\begin{alignat*}{2}
  & \text{minimize: } & & \sum_{j=1} \optcost(e) \cdot x_{e} \\
   & \text{subject to: }& \quad & \sum_{\mathclap{{e \in \delta(S)}}}
                \begin{aligned}[t]
                    x_{e} & \geq \Sur(S), &\qquad \quad \forall S \subseteq R\\[3ex]
                  x_{e} & \in \{0,1\}, &\qquad \quad \forall e \in E
                \end{aligned}
\end{alignat*}

Linear program (LP):
\begin{alignat*}{2}
  & \text{minimize: } & & \sum_{j=1} \optcost(e) \cdot x_{e} \\
   & \text{subject to: }& \quad & \sum_{\mathclap{{e \in \delta(S)}}}
                \begin{aligned}[t]
                    x_{e} & \geq \Sur(S), &\qquad \quad \forall S \subseteq R\\[3ex]
                  x_{e} & \geq 0, &\qquad \quad \forall e \in E
                \end{aligned}
\end{alignat*}

Dual program (DP):
\begin{alignat*}{2}
  & \text{maximise: } & & \sum_{S \subseteq R} y_S \cdot \Sur(S) \\
   & \text{subject to: }& \quad & \sum_{\mathclap{S:e \in \delta(S)}}
                \begin{aligned}[t]
                    y_S & \leq \optcost(e), &\qquad \quad \forall e \in E\\[3ex]
                  y_S & \geq 0,	&\qquad \quad \forall S \subseteq R
                \end{aligned}
\end{alignat*}
\subsubsection{Description of Algorithm}
The high-level idea behind our algorithm is to apply the primal-dual algorithm from the moat-growing framework in a concave time-augmented metric space. Similar to Bienkowski et al.~\cite{pd}, at any time $t$, the algorithm partitions the requests that have arrived into \emph{active sets} by maintaining a mapping $A: R \rightarrow 2^R$ such that, at any time, a request $u \in R$ belongs to a single active set. If the surplus of the active set is strictly positive we say the active set is \emph{growing}, else, it is \emph{non-growing}\footnote{This notation is not the standard notation in the moat growing framework but rather builds on the notation introduced in Bienkowski et al.~\cite{pd}}. 
For any active growing set $S$, during each iteration, our algorithm increases its corresponding dual variable by the maximum amount such that, for any request $u \in S$, if a new request $v$ arrives at $\pos(v) = \pos(u)$ in the original metric space, no dual constraints would be violated. 

Let $S$ be an active growing set. During iteration $i$, we calculate the amount of growth of a set as follows. For each request $u$, the minimum distance between itself and any other point will be the distance to a new request $v$ that arrives at the same position in the original metric space. In this case, $\optcost(u,v) = f(|\atime(v) - \atime(u)|)$. Therefore, the sum of all dual variables associated with sets that contain $u$ must be less than this to maintain the feasibility of the constraint associated with the edge. We refer to this value as $\reqGrowth (u)$. The value is calculated as follows for any given time $t$:
\begin{align*}
    \reqGrowth(u) = f(|t - \atime(u)|) - \sum_{S^\prime: u \in S^\prime}y_{S^\prime}
\end{align*}

\begin{algorithm}[H]
	\SetAlgoLined
    \textbf{\emph{Request arrival event:}}\\
	\If{a new request u arrives}{
		$A(u) = \{u\}$
	
	}
    \textbf{\emph{Tight constraint event:}}\\
	\While{there exists a tight dual constraint for $(u,v)$ s.t. $A(u) \neq A(v)$}{
		mark $(u,v)$
		\\$S = A(u) \cup A(v)$\\
		\For{all $w\in S$}{
			$A(w) = S$
		}
		Perform a maximum cardinality matching between unmatched requests in $S$.
	}
    \textbf{\emph{None of the above events occur:}}\\
	
		\For{all active growing sets $S$}{
			increase $y_S$ by $\displaystyle \min_{x \in S}\{\reqGrowth(x)\}$\\
	   }
	\caption{Primal-Dual Algorithm for Concave MPMD}
\end{algorithm}
\leavevmode\newline 
We denote the time at which the algorithm terminates as $T$. Let $y_S^T$ be the value of the dual variable corresponding to the set $S$ upon termination of the algorithm. 

Observe that, during each iteration $i$, (corresponding to the $i$th timestep), for any $S \subseteq R$ and $u \in S$, the sum of all dual variables corresponding to sets that contain $u$, is at most the total delay incurred by the request if it had remained unmatched. 
\begin{observation}
    \label{obs:feasible}
    For any timestep $i$ and $S \subseteq R$, for any $u\in S$, $\sum_{S^\prime: u \in S^\prime}y_{S^\prime} \leq f(i - \atime(u))$
\end{observation}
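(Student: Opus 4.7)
\textbf{Proof plan for Observation~\ref{obs:feasible}.}
The plan is to prove the inequality by induction on the timestep $i$, exploiting the fact that the algorithm's growth rule has been specifically designed to preserve it. The key identity is
\begin{align*}
\reqGrowth(u) = f(i - \atime(u)) - \sum_{S' : u \in S'} y_{S'},
\end{align*}
so $\reqGrowth(u)$ is \emph{exactly} the slack left before the invariant $\sum_{S' \ni u} y_{S'} \leq f(i - \atime(u))$ becomes tight. Since the algorithm increases $y_S$ by $\min_{x \in S}\reqGrowth(x)$, which is at most $\reqGrowth(u)$ whenever $u \in S$, the update cannot violate the invariant for $u$. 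The argument thus reduces to a careful bookkeeping of what may change between consecutive timesteps.

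For the base case, at time $i = \atime(u)$ the algorithm executes $A(u) \leftarrow \{u\}$ with $y_{\{u\}} = 0$, so the only set containing $u$ has dual value $0$ and the claim holds since $f$ is non-negative. For the inductive step, assume the inequality holds at time $i$ and consider the events processed up to time $i+1$. A request-arrival event only creates a new singleton $\{v\}$ with $y_{\{v\}} = 0$ for some $v \neq u$, so the sum for $u$ is unchanged. A tight-constraint event merges two active sets into $S = A(u') \cup A(v')$ but does not modify any $y$ values; the newly formed set starts with $y_S = 0$, so again the sum for $u$ is unchanged. The only event that can increase $\sum_{S' \ni u} y_{S'}$ is the growth step.

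In the growth step, since the mapping $A$ is a partition, $u$ lies in at most one active set $S = A(u)$; only the dual variable of this one set (if $S$ is growing) can be updated in a way that affects $u$'s sum. Its increment is at most $\reqGrowth(u)$, so immediately after the update,
\begin{align*}
\sum_{S' \ni u} y_{S'}^{\mathrm{new}} \;\leq\; \sum_{S' \ni u} y_{S'}^{\mathrm{old}} + \reqGrowth(u) \;=\; f(i - \atime(u)) \;\leq\; f\bigl((i+1) - \atime(u)\bigr),
\end{align*}
where the last inequality uses that the concave delay function $f$ is monotone non-decreasing. This closes the induction.

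The main subtlety I anticipate is ensuring that only the unique active set $A(u)$, and not any historical (now inactive) set containing $u$, can have its dual variable grow; this is what allows the per-step increment to $u$'s sum to be bounded by $\reqGrowth(u)$ rather than by a sum of $\reqGrowth$ values across several sets. The rest is routine, provided one is careful that request-arrival and tight-constraint events introduce only new sets with dual value $0$ and never reset or decrease any existing $y_{S'}$.
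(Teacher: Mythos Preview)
Your proof is correct and is essentially a formalization of what the paper leaves implicit: in the paper this is stated as an \emph{observation} with no proof beyond the preceding sentence, since the growth rule was explicitly designed so that the increment never exceeds $\reqGrowth(u)$ for any $u$ in the growing set. Your induction on timesteps, together with the bookkeeping that arrival and tight-constraint events only introduce new sets with zero dual value and that at most one active set containing $u$ can grow per step, makes the invariant-preservation argument explicit and matches the intended reasoning.
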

Furthermore, for at least one request $x$, the sum of the values of the dual variables of all sets that contain it is equal to the total delay cost it would have incurred so far if left unmatched.
\begin{observation}
    \label{obs:tight}
    For any timestep $i$ and $S \subseteq R$, there exists $x \in S$ such that $f(i - \atime(x)) = \sum_{S: x \in S} y_S$.
\end{observation}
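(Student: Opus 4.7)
The plan is to prove Observation~\ref{obs:tight} by induction on the events of the algorithm, reading it as the invariant that, immediately after every growth step, each active growing set $S$ contains some $x \in S$ whose accumulated dual mass $\sum_{S' : x \in S'} y_{S'}$ equals $f(i - \atime(x))$, the delay $x$ would have incurred if left unmatched. This is precisely the invariant that the growth rule of increasing $y_S$ by $\min_{x \in S} \reqGrowth(x)$ is designed to restore: after such a step, the minimiser of $\reqGrowth$ over $S$ becomes tight by construction, which supplies the required witness $x$.

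For the base case, when a request $u$ first arrives at time $i = \atime(u)$ the algorithm creates the singleton active set $\{u\}$; at that instant $f(i - \atime(u)) = f(0) = 0$ and no dual variable containing $u$ is yet positive, so $u$ itself is a trivial witness. For the inductive step I would distinguish the three event types. First, a fresh request arrival reduces to the base case. Second, if a tight-constraint event merges $A(u)$ and $A(v)$ into $S = A(u) \cup A(v)$, apply the inductive hypothesis to $A(u)$ to obtain a witness $x_1 \in A(u)$; since the merge leaves every $y_{S'}$ value and every $\atime$ unchanged, $x_1 \in S$ remains a valid witness. Third, during the growth update on an active growing $S$ the algorithm increases $y_S$ by $\delta := \min_{x \in S} \reqGrowth(x)$, and for any minimiser $x^\star$ the update yields
\begin{align*}
f(i - \atime(x^\star)) - \sum_{S' : x^\star \in S'} y_{S'} = \reqGrowth(x^\star) = 0,
\end{align*}
which is exactly the desired equality.

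The main obstacle is the bookkeeping across iteration boundaries: between consecutive growth steps $f(i - \atime(x))$ may increase with $i$ while the $y_S$ values are unchanged, so the statement has to be read as asserting the invariant at the end of each iteration's growth step rather than at an arbitrary moment in time. Observation~\ref{obs:feasible} ensures $\reqGrowth(x) \geq 0$ for every $x$, so $\delta \geq 0$ is well-defined and growing $y_S$ by $\delta$ cannot violate any dual constraint used to compute it. A secondary subtlety is that the identity of the witness $x^\star$ can shift from one iteration to the next; however, since the observation is purely existential it is enough to exhibit a witness per iteration rather than track a single request across time.
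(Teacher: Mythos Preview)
Your proposal is correct. The paper does not prove this statement at all --- it is presented as an observation meant to be immediate from the growth rule $y_S \mathrel{+}= \min_{x\in S}\reqGrowth(x)$, which by construction leaves the minimiser $x^\star$ with $\reqGrowth(x^\star)=0$. Your induction over the algorithm's events is the natural way to make this rigorous, and you correctly flag both that the quantifier ``$S \subseteq R$'' in the statement should really range only over active growing sets and that the equality is to be read as holding immediately after each growth step rather than at an arbitrary instant.
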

Finally, observe that a dual variable will only be non-zero if the set has an odd number of requests.
\begin{observation}
	\label{obs:sur}
	Upon termination of the algorithm, for any $S \subseteq R$, $y_S^T > 0$ if $\Sur(S) \geq 1$.
\end{observation}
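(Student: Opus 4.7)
The plan is to show that for every set $S$ that becomes active during the execution of the algorithm and has $\Sur(S) \geq 1$, the corresponding dual variable $y_S^T$ is strictly positive upon termination. At a high level, this follows from the algorithm's growth rule: $y_S$ is increased by $\min_{x \in S} \reqGrowth(x)$ in every iteration where $S$ is an active growing set, and I would argue that at least one such iteration produces a strictly positive increment.

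First, I would characterize the lifetime of $S$ as an active growing set. Since $\Sur(S) \geq 1$ means $|S|$ is odd, $S$ is growing from the moment it becomes active, which occurs either through the arrival of a single request (if $|S|=1$) or through the merger of two smaller active sets via a tight constraint event. From this moment until either $S$ is absorbed into a larger active set via a subsequent tight constraint event, or the algorithm terminates, $S$ remains active and growing, so $y_S$ is eligible for growth in each iteration.

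Next, I would argue that $y_S$ is incremented by a strictly positive amount in at least one iteration during $S$'s lifetime. Suppose for contradiction that $y_S$ remains $0$ throughout. Since active sets partition the requests, $S$ is the unique active set containing any given $x \in S$; every other set $S' \ni x$ is inactive and hence has its dual value frozen. If $y_S$ does not grow during $S$'s lifetime, then $\sum_{S' \ni x} y_{S'}$ remains constant over that interval, while $f(t - \atime(x))$ strictly increases with time (since $f$ is a strictly increasing concave delay function with $f(0)=0$). Hence $\reqGrowth(x) = f(t - \atime(x)) - \sum_{S' \ni x} y_{S'}$ strictly grows for every $x \in S$, and eventually $\min_{x \in S} \reqGrowth(x) > 0$ in some iteration, forcing $y_S$ to receive a positive increment and contradicting our assumption. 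Observation~\ref{obs:tight} further aligns with this: the witness $x$ with $\sum_{S' \ni x} y_{S'} = f(t - \atime(x))$ can continue to play this role across iterations only if $y_S$ itself grows to absorb the increase in $f$.

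The main obstacle is the possibility that $S$ has a degenerate, effectively instantaneous lifetime as an active set—specifically, when $S$ is both formed and subsequently absorbed into a larger set within the same tight-constraint event (the while-loop in the algorithm), without any intervening dual-growth iteration. In such a case, $y_S$ would never be incremented. I would handle this by interpreting the algorithm in a continuous-time model in which tight-constraint events are separated by infinitesimal dual growth, ensuring every set that appears in the partition maintained by $A$ persists for a non-trivial duration; equivalently, one may consider only the ``maximal'' merged set reached at the end of each while-loop pass as the active set that bears a dual variable. Under this convention every active set $S$ with $\Sur(S) \geq 1$ experiences at least one iteration of positive dual growth, and Observation~\ref{obs:sur} follows.
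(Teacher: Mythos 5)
You have proved the implication in the direction the observation literally states ($\Sur(S) \geq 1 \Rightarrow y_S^T > 0$), but that is not the claim the paper actually needs, and as written over ``any $S \subseteq R$'' it is false: exponentially many odd-cardinality subsets never appear as active sets and their dual variables remain $0$ forever. You implicitly noticed this by restricting to sets that become active, and even then you had to patch the degenerate case of a set that is created and absorbed within a single tight-constraint pass by reinterpreting the algorithm in continuous time --- which is modifying the algorithm rather than proving the statement. The prose introducing the observation (``a dual variable will only be non-zero if the set has an odd number of requests'') and its use in Lemma~\ref{lem:optcost}, where the step $\sum_S |\delta(S)\cap P|\cdot y_S^T \leq \sum_S |\delta(S)\cap P|\cdot y_S^T\cdot \Sur(S)$ is justified by Observation~\ref{obs:sur}, make clear that the intended claim is the converse: $y_S^T > 0$ \emph{only if} $\Sur(S) \geq 1$ (the ``if'' in the statement is a typo for ``only if'').

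That converse has a one-line proof that your argument does not supply: all dual variables are initialised to $0$, and the algorithm increments $y_S$ only in iterations where $S$ is an active \emph{growing} set; ``growing'' is defined as having strictly positive surplus, i.e.\ odd cardinality. Since $S$ is a fixed set of requests, its parity never changes, so any $S$ with $y_S^T > 0$ must satisfy $\Sur(S) = 1$. None of the machinery in your proposal (monotonicity of $f$, Observation~\ref{obs:tight}, the lifetime analysis) is needed for this, and the difficulty you spent the most effort on --- guaranteeing at least one strictly positive increment --- is a difficulty in the wrong theorem.
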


\subsubsection{Correctness}
Let $y^t_S$ denote the value of the dual variable $y_S$ at time $t$. In order to relate the cost of the matching produced by the algorithm to the value of the dual solution produced, the dual needs to be feasible. Furthermore, to prove the correctness of the algorithm, we need to show that, upon termination, it produces a perfect matching of the requests. We start by proving the dual solution maintained by the algorithm is feasible. 

\begin{lemma}
	\label{lem:feasible}
	At any time $i$, the values $y^i_S$ constitute a feasible solution to the dual.
\end{lemma}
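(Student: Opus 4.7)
My plan is to prove the lemma by induction on $i$, maintaining the invariant that at the end of iteration $i$, every dual constraint $\sum_{S:e\in\delta(S)}y_S \le \optcost(e)$ is satisfied. The base case $i=0$ is immediate because every $y_S$ equals zero.

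For the inductive step I would split the analysis according to the three kinds of events that may happen inside an iteration: (i) one or more requests arriving, (ii) tight constraint events being processed, and (iii) a growth step. Arrivals only add new dual constraints while leaving all $y_S$ unchanged. A new edge $(u,v)$ introduced when $u$ arrives has its LHS equal to $\sum_{S:v\in S}y_S$, because no existing set contains the just-arrived request $u$; by Observation~\ref{obs:feasible} this quantity is bounded by $f(i-\atime(v)) = f(|\atime(u)-\atime(v)|) \le \optcost((u,v)) = \timedistc(u,v)$. The tight-constraint loop only rewrites the mapping $A$; no $y_S$ is modified and the inductive hypothesis carries over unchanged.

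The growth step is the heart of the argument. I would rely on two structural facts. First, the collection of sets that have ever been activated forms a laminar family, so if $A(u)\neq A(v)$ at the start of the growth step then no set $S$ with $y_S>0$ contains both endpoints. Second, the quantity $\min_{x\in S}\reqGrowth(x)$ by which $y_S$ is increased is precisely the largest value compatible with the per-request ceiling $\sum_{S':x\in S'}y_{S'}\le f(i-\atime(x))$ recorded in Observation~\ref{obs:feasible}. Given these, I would analyse an arbitrary edge $(u,v)$ in two cases. If $A(u)=A(v)$, laminarity lets me locate the merge iteration $\tau^*$ at which $u$ and $v$ were first brought into a common active set; at that moment a tight-constraint event forced the LHS to equal $\optcost((u,v))$, and no growth since $\tau^*$ can touch this LHS because every currently active set containing $u$ or $v$ in fact contains both. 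If instead $A(u)\neq A(v)$, laminarity lets me write $\mathrm{LHS}=\sum_{S':u\in S'}y_{S'}+\sum_{S':v\in S'}y_{S'}$, and I would bound the new contribution from $A(u)$ and $A(v)$ via $\reqGrowth$, the subadditivity of $f$ (from concavity together with $f(0)=0$), and the triangle inequality of $\timedistc$ established in Claim~\ref{clm:triangle}.

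The main obstacle will be this last subcase: the growth rule is phrased per request rather than per edge, so one must argue carefully that the combined growth on the two sides of an edge with $A(u)\neq A(v)$ cannot overshoot $\optcost((u,v))$. The cleanest way to handle it is to interpret the growth step continuously — duals flow at a common rate until either some per-request ceiling $f(i-\atime(x))$ or some edge constraint becomes tight — so that any edge reaching equality instantly triggers a tight-constraint event and a merge before any violation can occur. The discrete update $\min_{x\in S}\reqGrowth(x)$ then corresponds to the maximum movement the continuous dynamics permit within one time step in the absence of such an edge event, which together with the laminar decomposition above closes the induction.
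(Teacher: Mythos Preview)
Your approach is essentially the paper's: both are the standard moat-growing feasibility argument, handling request arrivals via Observation~\ref{obs:feasible} and handling growth by appealing (implicitly in the paper, explicitly in your last paragraph) to the continuous interpretation in which duals rise until the first crossing-edge constraint goes tight and a merge is triggered. The paper does not carry out your case split on whether $A(u)=A(v)$; it simply notes that once an edge's constraint goes tight its endpoints are absorbed into a single active set, after which that constraint's left-hand side cannot increase.

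Two small issues in your write-up. In the $A(u)=A(v)$ case, the merge at $\tau^*$ was triggered by \emph{some} edge between the two active sets becoming tight, not necessarily $(u,v)$ itself, so you can only claim that the left-hand side of the $(u,v)$ constraint is \emph{at most} $\optcost((u,v))$ at $\tau^*$ (by the inductive hypothesis), not that it equals it; this still suffices, since as you note no later growth touches it. In the $A(u)\neq A(v)$ case, subadditivity of $f$ together with the triangle inequality for $\timedistc$ will not bound $f(i-\atime(u))+f(i-\atime(v))$ by $\optcost((u,v))$ --- the inequalities point the wrong way --- so that route cannot close the argument; the continuous interpretation you fall back on is the actual argument, and it is exactly what the paper relies on.
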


\begin{proof} We show that no dual constraint is violated throughout the execution of the algorithm. When the algorithm starts, all dual variables are initialised to 0, which satisfies all constraints. A dual variable is only increased if the corresponding set of requests is both active and growing. By definition, it can only be active and growing if, for all edges that cross the cut of $S$, the corresponding dual constraint is not yet tight. When a constraint becomes tight, the set is deactivated and the edge that corresponds to the tight constraint becomes an internal edge in the new active set. Since its two endpoints from now on belong to the same active set, its corresponding dual constraint can no longer increase. 

The only remaining case to consider is when a new request arrives. Upon arrival of a new request $u$, all new dual variables associated with sets that contain $u$ are initialised to 0. For each existing constraint, corresponding to edges not incident to $u$, new dual variables may be added to the set. However, since these variables are initialised with the value 0, they do not contribute towards the sum, preserving the feasibility of the constraint. For all new edges (edges whose endpoint is $u$), we show the new constraint is feasible, i.e., $\sum_{S:(u,v)\in \delta(s)}y^t_S \leq \optcost\big((u,v)\big)$. 
 \begin{align*}
 	\sum_{S:(u,v) \in \delta(S)}	 y^t_S &= \sum_{S:v\in S \wedge u \notin S} y^t_S + \sum_{S:v\notin S \wedge u \in S} y^t_S\\
 	&= \sum_{S:v\in S \wedge u \notin S} y^t_S\\
 	&\leq \sum_{S:v\in S} y^t_S\\
 	&\leq f(t - \atime(v)) \qquad &\text{(by Observation~\ref{obs:feasible})} \\
 	&= f(\atime(u) - \atime(v))\\
 	&\leq \optcost\big((u,v)\big). \qedhere
 \end{align*}
\end{proof}

Finally, to establish that, upon terminating, the algorithm returns a perfect matching of all requests, we use the following lemma, which follows standard arguments in moat-growing algorithms.

\begin{lemma}
    \label{lem:perfmatch}
	For MPMD with concave delays, the algorithm returns a perfect matching of all requests.
\end{lemma}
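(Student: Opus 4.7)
My proof will follow the standard moat-growing paradigm adapted to the concave time-augmented setting, and the plan has three substantive steps before combining them into a contradiction. First, I will establish by induction on algorithm events the invariant that every even-cardinality active set consists entirely of matched requests. Request arrivals create singletons, which are odd and so vacuous; at a merge of $A, B$ into $S = A \cup B$, a case analysis on parities suffices: if $|A|, |B|$ are both odd then $|S|$ is even and the inductive hypothesis gives exactly two unmatched requests in $S$ that are paired by the maximum cardinality matching; if both are even then $|S|$ is even and already fully matched; if exactly one is odd then $|S|$ is odd and there is nothing to prove.

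Next, I will argue by contradiction that no request can remain unmatched. Suppose some request $u$ is never matched. Since $u$ always lies in its active set, the invariant forces the cardinality of the active set containing $u$ to be odd at every time, so this active set is always a growing set. Observation~\ref{obs:tight} furnishes, at every time, an $x^* \in A(u)$ with $\sum_{S' : x^* \in S'} y_{S'} = f(t - \atime(x^*))$; the right-hand side tends to infinity as $t \to \infty$ by the divergence property of the delay function. Inactive dual variables are frozen by the algorithm, so the dual of the current active set containing $x^*$ must grow without bound in the absence of merges.

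To close the contradiction, I will combine this unbounded-growth conclusion with dual feasibility (Lemma~\ref{lem:feasible}), which caps $y_{A(u)}$ above by $\optcost(e)$ for any edge $e \in \delta(A(u))$, a finite quantity. Consequently an unmatched $u$ must force a tight-constraint event and hence a merge involving $A(u)$ within finite time. Each merge strictly reduces the number of active sets, so only finitely many merges can occur, and the active set containing $u$ must eventually stabilise. A stable odd active set contradicts the preceding paragraph, while a stable even active set implies $u$ is matched by the Step~1 invariant --- contradicting the assumption that $u$ is unmatched. Hence every request is eventually matched and the algorithm returns a perfect matching.

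The main obstacle is the second step: in the concave setting the rate of growth of the dual of a growing set is non-uniform across requests and decreases over time (since $f'$ is decreasing), so a direct argument bounding rates from below is delicate. The plan is to sidestep derivatives entirely by invoking Observation~\ref{obs:tight}, which ties the cumulative dual mass directly to $f$ rather than $f'$, and thereby lets the unboundedness of $f$ do the work. The parity bookkeeping in the invariant is routine but must be aligned carefully with the wording ``maximum cardinality matching between unmatched requests'' in the pseudocode, since some requests in a merged set may already have been matched in earlier merges.
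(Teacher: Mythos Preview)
Your proposal is correct and more detailed than the paper's own proof, but it reaches the contradiction by a genuinely different mechanism. The paper does not carry out the parity bookkeeping of your Step~1 at all; it simply asserts that if $u$ is unmatched then $A(u)$ remains active and growing, observes that $y_{A(u)}$ therefore increases indefinitely, and then closes with weak LP duality: a feasible dual with unbounded objective forces the primal LP to be infeasible, contradicting the existence of a finite-cost perfect matching in the metric. Your route instead converts the unbounded growth of $y_{A(u)}$ into a forced tight-constraint event via a single edge $e \in \delta(A(u))$, and then exploits the finiteness of merges to reach a stable state that is impossible. Your argument is more elementary (no appeal to LP duality) and actually justifies the parity claim the paper leaves implicit; the paper's argument is a two-line application of duality once that claim is granted.

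Two small points to tighten when you write it out. In Step~1, the invariant as stated (``even active sets are fully matched'') does not by itself give ``exactly two unmatched'' in the both-odd merge case; either strengthen the invariant to ``an active set of parity $p$ has exactly $p$ unmatched requests,'' or simply observe that a maximum-cardinality matching on a complete graph leaves at most one vertex exposed, so every active set carries at most one unmatched request and parity does the rest. In Step~3, the bound $y_{A(u)} \le \optcost(e)$ requires $\delta(A(u)) \neq \emptyset$; this holds because your stable $A(u)$ is odd while $m$ is even, hence $A(u) \subsetneq R$---worth stating explicitly.
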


\begin{proof}
	Let $f$ be a continuous, monotone non-decreasing, concave function. For the sake of contradiction, suppose, upon termination of the algorithm, $u$ remains unmatched. $u$ must at all times belong to some active set $A(u)$. Since $u$ is unmatched $A(u)$ remains both active and growing. It follows that the dual variable $y_{A(u)}$ will continue to increase indefinitely. All dual variables have a positive coefficient in the dual objective function. By Lemma~\ref{lem:feasible}, the dual solution maintained by the algorithm is feasible. It follows the dual must be unbounded and no bounded feasible primal solution exists. Since all distances in the metric space are finite, there exists a finite solution to the primal program. This constitutes a contradiction.
\end{proof}

\subsubsection{Cost Analysis}
Having proven that the algorithm produces a valid matching $M$ and maintains a feasible dual solution, we now prove Theorem~\ref{thm:concave} by bounding the cost of the matching produced by ALG in terms of the value of the dual solution it produces along the way.
 
We start by defining the cost of $M$, which consists of the distance between matched requests in the original metric space, plus the delay cost incurred by each request. We denote the cost of $M$ as $\cost(M)$, and the cost of a single edge $(u,v) \in M$ as $\cost((u,v))$. Let $t_{(u,v)}$ be the time at which $u$ and $v$ are matched. Without loss of generality, we assume $\atime(v) \geq \atime(u)$. For all $(u,v)$ in $M$ we define the cost of a match as follows.
\begin{align*} 
	\cost((u,v)) = \dist(u,v) + f(t_{(u,v)}-\atime(u)) + f(t_{(u,v)}-\atime(v)).
\end{align*}
In order to bound the cost of a match in terms of the value of the dual solution, we first express the cost in terms of the distance between the two requests in the concave time-augmented metric space. We do this by first adding and subtracting $f(\atime(v)-\atime(u))$, and then rearranging the terms.
\begin{align*} 
	\cost((u,v)) &= \big(\dist(u,v) + f(\atime(v)-\atime(u))\big) \\
                    & + \big(f(t_{(u,v)}-\atime(u)) - f(\atime(v)-\atime(u))\big)\\ 
                    & + f(t_{(u,v)}-\atime(v))\\
    &= {\timedistc(u,v)} - f(\atime(v)-\atime(u)) + 2\big(f(t_{(u,v)}-\atime(u)).
\end{align*}
Using the subadditive property of a concave function, we note that $f(|t_{(u,v)}-\atime(u)|) - f(|\atime(v)-\atime(u)|) \leq f(|t_{(u,v)}-\atime(v)|)$. Hence, we upper bound the cost as follows:
\begin{align*} 
	\cost((u,v)) &\leq {\timedistc(u,v)} + 2 \cdot f(t_{(u,v)}-\atime(v))\\
	&\leq {\optcost\big((u,v)\big)} + 2 \cdot f(t_{(u,v)}-\atime(v)).
\end{align*}
To analyse the cost of the matching we will, for all $(u,v) \in M$, bound both $\optcost\big((u,v)\big)$ and $f(t_{(u,v)}-\atime(v))$ in terms of the value of the dual solution. We start by bounding $\optcost\big((u,v)\big)$ in terms of the cost of the marked edges, corresponding to tight dual constraints. We prove that, for any edge $(u,v) \in M$, there exists a path $P$ of marked edges between $u$ and $v$.

We start by proving the following lemmas, which address the state of the marked edges throughout the execution of the algorithm (note that these lemmas follow standard arguments for moat-growing algorithms).

\begin{lemma}
    \label{lem:cut}
    At any point in time, there exist no marked edges that cross the cut of an active set.
\end{lemma}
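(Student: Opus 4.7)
The plan is to prove the stronger invariant that, from the moment an edge $(u,v)$ is marked onwards, its endpoints $u$ and $v$ are always contained in the same current active set, i.e.~$A(u) = A(v)$ at all subsequent times. Since an edge only crosses the cut of a set $S$ when exactly one of its endpoints is in $S$, this invariant immediately implies the lemma (for any current active set $S$, either both endpoints lie in $S$ or neither does, so the marked edge is not in $\delta(S)$).

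I would proceed by induction on the sequence of events handled by the algorithm: request arrivals, tight constraint events, and passage of time (dual growth steps). At the moment $(u,v)$ is marked, the algorithm executes the tight-constraint branch, which sets $S = A(u) \cup A(v)$ and assigns $A(w) = S$ for every $w \in S$; so immediately after marking, both $u$ and $v$ lie in the same active set $S$, establishing the base case. For the inductive step, consider any subsequent event and any previously marked edge $(u,v)$ with $A(u) = A(v) = S$ beforehand. A request-arrival event creates a new singleton active set for a brand-new request and does not modify the mapping $A$ on previously arrived requests, so $A(u) = A(v)$ is preserved. A dual-growth step does not touch the mapping $A$ at all. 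A tight-constraint event merges two current active sets $A(x)$ and $A(y)$ into their union; after relabeling, every request previously mapped to $A(x)$ or $A(y)$ is mapped to the merged set, and every other request retains its old active set. Since $u$ and $v$ were in the same active set $S$, they are either both relabeled (if $S \in \{A(x), A(y)\}$) or neither is, so they continue to share an active set.

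The only subtlety worth double-checking is that active sets never \emph{split}: the algorithm's only modifications to $A$ are the initial singleton assignment on request arrival and the merge assignment during a tight-constraint event, and neither of these can separate two requests that were previously in the same active set. This gives the desired invariant and hence the lemma.

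The main obstacle is conceptual rather than technical: one must be careful to distinguish between the current active set $A(u)$ of a request and the many historical sets $S$ for which $y_S$ may have become positive at some earlier time. The lemma is only a statement about current active sets, and once this distinction is made, the proof reduces to straightforward case analysis on the three event types.
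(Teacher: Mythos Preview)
Your proposal is correct and takes essentially the same approach as the paper: both argue that once $(u,v)$ is marked its endpoints lie in a common active set, and since active sets only ever merge (never split), they remain together thereafter. Your version is simply more explicit, spelling out the induction over event types that the paper's one-paragraph proof leaves implicit in the phrase ``from this point on belong to the same active set.''
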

\begin{proof}
    When the algorithm commences, all active sets are singletons and are trivially spanned by the an empty set of marked edges. An edge $(u,v)$ is only ever marked in the event of a tight constraint. In the event of a tight constraint, the two active sets $A(u)$ and $A(v)$ containing $u$ and $v$ merge, deactivating $A(u)$ and $A(v)$, and activating the set of their union. Since $u$ and $v$ from this point on belong to the same active set, the marked edge will never cross the cut of an active set. 
\end{proof}

An edge $(u,v)$ can only be added to $M$ in the event of a tight constraint where two active sets merge. Hence, to prove there exists a path of marked edges between $u$ and $v$ we need to argue that there exists a unique path of marked edges between any two requests inside an active set. 

\begin{lemma}
    \label{lem:spanningtree}
    During any iteration $i$ of the algorithm, the subset of marked edges that are entirely contained in an active set $S$ form a spanning tree of all the requests in $S$.
\end{lemma}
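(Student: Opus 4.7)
The plan is to prove the lemma by induction on the events processed by the algorithm (request arrivals, tight constraint events, and dual growth steps). Throughout, I will maintain the stronger invariant that for every active set $S$, the marked edges with both endpoints in $S$ form a spanning tree of $S$.

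For the base case, before any request has arrived, the statement holds vacuously. When the algorithm starts processing requests, each new request $u$ is placed in its own singleton active set $A(u) = \{u\}$, and the empty edge set is trivially a spanning tree on a single vertex.

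For the inductive step, I would check each of the three event types. A \emph{request arrival} creates a fresh singleton active set with no marked edges, which preserves the invariant, and leaves all previously existing active sets unchanged. A \emph{dual growth step} modifies neither the active sets nor the marked edge set, so the invariant is trivially preserved. The only substantive case is a \emph{tight constraint event} on an edge $(u,v)$ with $A(u) \neq A(v)$. Here $(u,v)$ is marked, the two old active sets $A(u)$ and $A(v)$ are deactivated, and a new active set $S = A(u) \cup A(v)$ is created. By the inductive hypothesis, the marked edges inside $A(u)$ form a spanning tree $T_u$ of $A(u)$ and those inside $A(v)$ form a spanning tree $T_v$ of $A(v)$. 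By Lemma~\ref{lem:cut}, no marked edge crosses the cut of any previously active set, so the only marked edge between $A(u)$ and $A(v)$ that becomes internal to $S$ is precisely the newly marked edge $(u,v)$. Thus the marked edges internal to $S$ are exactly $T_u \cup \{(u,v)\} \cup T_v$, which is a spanning tree of $S$.

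The main (mild) obstacle is ensuring that no marked edge other than $(u,v)$ suddenly becomes internal to $S$ when $A(u)$ and $A(v)$ merge, since such an edge would create a cycle and break the tree property. This is exactly ruled out by Lemma~\ref{lem:cut}: any marked edge between $A(u)$ and $A(v)$ would have crossed the cut of an active set prior to the merge, contradicting that lemma. Together with the fact that the algorithm's while-condition $A(u) \neq A(v)$ prevents marking an edge whose endpoints already lie in the same active set, this confirms that the inductive construction yields a tree rather than a graph with cycles, completing the proof.
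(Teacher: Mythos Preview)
Your proof is correct and follows essentially the same approach as the paper: induction over the algorithm's iterations, with the key step being that when a tight constraint event merges $A(u)$ and $A(v)$, the inductive hypothesis gives spanning trees inside each, Lemma~\ref{lem:cut} guarantees no other marked edge crosses between them, and adding the newly marked edge $(u,v)$ yields a spanning tree of the union. Your version is slightly more explicit in checking the request-arrival and dual-growth cases and in noting that the while-condition $A(u)\neq A(v)$ prevents marking an internal edge, but the argument is the same.
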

\begin{proof}
    We show the property holds by induction on the number of iterations of the algorithm. During the first iteration, the empty set forms a valid spanning tree of the active sets that are singletons. We assume the property holds for all sets up to some iteration $k$ during the execution of our algorithm. If, at time $k+1$, a tight constraint event occurs, corresponding to some edge $(u,v)$, the active sets $A(u)$ and $A(v)$ merge. By our inductive hypothesis, the marked edges entirely contained inside $A(u)$ and $A(v)$ form spanning trees of the requests inside each respective set. From Lemma~\ref{lem:cut} we know that there are no marked edges that cross between the two sets. Therefore, the union of the spanning trees in $A(u)$ and $A(v)$, along with the marked edge that corresponds to the tight dual constraint, form a valid spanning tree of the new active set.
\end{proof}

We are now ready to argue that, for any $(u,v)\in M$, there exists a path $P$ of marked edges between them.

\begin{lemma}
    \label{lem:path}
    For any $(u,v) \in M$, upon termination of the algorithm, there exists a path $P$ of marked edges between them.
\end{lemma}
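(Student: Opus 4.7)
The plan is to trace back the moment at which the edge $(u,v)$ was added to $M$ and invoke Lemma~\ref{lem:spanningtree} to locate a path of marked edges between $u$ and $v$ at that moment. Since marked edges are never unmarked by the algorithm, this path persists at termination.

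First I would observe that, by inspection of the algorithm, the only place where an edge is added to $M$ is inside the tight-constraint branch: when a tight dual constraint for some edge $(p,q)$ with $A(p) \neq A(q)$ is detected, the algorithm marks $(p,q)$, merges $A(p)$ and $A(q)$ into $S = A(p) \cup A(q)$, updates the pointer $A(w)$ for each $w \in S$, and then performs a maximum cardinality matching between the currently unmatched requests in $S$. Thus for any $(u,v) \in M$, there is a unique iteration $i^\star$ at which $(u,v)$ is added, and at that iteration both $u$ and $v$ lie in the newly formed active set $S^\star$.

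Next I would apply Lemma~\ref{lem:spanningtree} at the end of iteration $i^\star$: the marked edges entirely contained in the active set $S^\star$ form a spanning tree of $S^\star$. In particular, both $u$ and $v$ are vertices of this spanning tree, so it contains a (unique) path of marked edges between them; call this path $P$.

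Finally, I would note that the algorithm never unmarks an edge (marking only occurs in the tight-constraint branch, and no subsequent step removes marks), so every edge of $P$ remains marked at every later iteration, and in particular upon termination at time $T$. Hence $P$ is a path of marked edges from $u$ to $v$ in the final state, as required. The step that requires the most care is pinning down that the edges of $P$ continue to lie in a single active set long enough for Lemma~\ref{lem:spanningtree} to apply at iteration $i^\star$; but this is immediate since merging only enlarges active sets and the endpoints of every edge of $P$ are, from iteration $i^\star$ onwards, always in a common active set containing $S^\star$.
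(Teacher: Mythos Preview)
Your proposal is correct and follows essentially the same approach as the paper: both arguments observe that $u$ and $v$ lie in a common active set and invoke Lemma~\ref{lem:spanningtree} to extract a path of marked edges. The only cosmetic difference is that the paper applies Lemma~\ref{lem:spanningtree} directly at termination (noting that $A(u)=A(v)$ then), whereas you apply it at the merge iteration $i^\star$ and then argue the path persists because edges are never unmarked; both routes are valid and equally short.
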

\begin{proof}
    By construction of the algorithm, if $(u,v)\in M$, upon termination of the algorithm, $A(u) = A(v)$. By Lemma~\ref{lem:spanningtree} there must exist a path of marked edges between $u$ and $v$ (which is entirely contained inside $A(u)$).
\end{proof}

Next, we bound the cost of $P$ in terms of the value of the dual solution. We first note that the dual constraint corresponding to any marked edge must be tight. Hence, for all marked edges $e$, $\optcost(e) = \sum_{S: e \in \delta(S)}$. In order to express this in terms of the value of the dual solution, we need to be able to upper bound the number of times the path $P$ crosses the cut of any set $S$. For this purpose, we use the following lemma. 

\begin{lemma}
	\label{lem:geq2}
    For any path of marked edges $P$ and active set $S \subseteq R$, $|\delta(S) \cap P| \leq 2$.
\end{lemma}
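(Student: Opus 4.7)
The plan is to exploit the spanning-tree structure of marked edges (Lemma~\ref{lem:spanningtree}) together with the laminar structure of active sets, reducing the claim to a purely graph-theoretic statement about paths and connected subtrees. The key insight is that any historically active set induces a connected subtree in the global forest of marked edges, and a simple path in a tree can cross the cut of any connected subtree at most twice.

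First, I would observe that since active sets only ever merge and never split, the collection of sets that are active at some iteration forms a laminar family, and each such set $S$ is contained in a unique final active set $T^*$ at termination. By Lemma~\ref{lem:spanningtree} applied at the iteration when $S$ is active, the marked edges entirely inside $S$ form a spanning tree of $S$; since marked edges are never removed and subsequent merging events only add new marked edges across the boundary of the growing active set, this subtree persists and remains a connected subtree within the spanning tree of $T^*$ at termination. Applying Lemma~\ref{lem:cut} at termination, no marked edge crosses between final active sets, so the marked edges globally decompose into a forest with one tree per final active set. Hence the path $P$ lies in a single such tree; if $S$ and $P$ belong to different final active sets, then $S$ contains no vertex of $P$ and $|\delta(S) \cap P| = 0$, and otherwise $S \subseteq T^*$, so it suffices to show that in a tree, if a vertex subset induces a connected subtree then any simple path crosses its cut at most twice.

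For this final step I would argue by contradiction: if $P$ crossed $\delta(S)$ three or more times, then $P$ would enter $S$ at two distinct vertices $d_1, d_2 \in S$ via edges whose other endpoint lies outside $S$. Since $S$ induces a connected subtree of $T^*$, there is a unique path from $d_1$ to $d_2$ lying entirely within $S$; together with the subpath of $P$ connecting $d_1$ to $d_2$ through vertices outside $S$, this yields two distinct $d_1$--$d_2$ paths in $T^*$, contradicting the tree structure. The main obstacle is the persistence argument in the second paragraph---establishing that a historically active set remains the vertex set of a connected subtree of marked edges even after many subsequent merges---but once that is in place, the cycle argument is a clean topological step.
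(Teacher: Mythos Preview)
Your proposal is correct and follows essentially the same approach as the paper: both arguments use Lemma~\ref{lem:spanningtree} to obtain a path of marked edges inside $S$ between two entry points of $P$, and then derive a cycle contradicting the forest structure of the marked edges. You are simply more explicit than the paper about the persistence of the subtree on a historically active set and about the degenerate case where $S$ and $P$ lie in different final components, both of which the paper leaves implicit.
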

\begin{proof}
For the sake of contradiction, assume there exists a path $P$ that crosses the boundary of a set $S$ more than twice. Let us denote the endpoints of the path as $u$ and $v$, We direct the path from $u$ to $v$\footnote{Note that $u$ may be in $S$.}. Since the path crosses the boundary of $S$ more than twice, there must exists at least one edge $(w_1, w_2) \in P$ such that $w_1 \in S$ and $w_2 \notin S$, and at least one edge $(w_3, w_4) \in P$ such that $w_3 \notin S$ and $w_4 \in S$. By Lemma~\ref{lem:spanningtree}, there must exist a path of marked edges between $w_1$ and $w_4$ that is entirely contained inside $S$. This indicates the existence of a cycle of marked edges. However, from Lemmas~\ref{lem:cut} and ~\ref{lem:spanningtree} we can deduce that the set of marked edges constitutes a forest. This is a contradiction.
\end{proof}

We are now ready to bound $\optcost\big((u,v)\big)$ for any $(u,v) \in M$ in terms of the value of the dual solution. We do this using the following lemma. We denote the value of the dual variables upon termination of the algorithm as $y_S^T$.

\begin{lemma}
    \label{lem:optcost}
    Upon termination of the algorithm, for any $u,v \in R$, connected by a path of marked edges $P$, $\optcost\big((u,v)\big) \leq 2 \cdot \sum_{S \subseteq R} y_S^T \cdot \Sur(S)$.
\end{lemma}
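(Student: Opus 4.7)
My plan is to bound $\optcost((u,v))$ by walking along the path $P$ using the triangle inequality, and then converting the cost of each marked edge into a sum over dual variables via the tight-constraint property. Since $\timedistc$ is a genuine metric (Claim~\ref{clm:triangle}) and $\optcost$ is defined as $\timedistc$, applying the triangle inequality along the edges of $P$ gives
\begin{align*}
    \optcost((u,v)) \leq \sum_{e \in P} \optcost(e).
\end{align*}

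Next, I would use the fact that every marked edge $e$ corresponds to a dual constraint that is tight at the moment $e$ was marked, and dual variables only increase, so $\optcost(e) = \sum_{S:\,e \in \delta(S)} y_S^T$. Substituting into the previous display and swapping the order of summation yields
\begin{align*}
    \sum_{e \in P} \optcost(e) \;=\; \sum_{S \subseteq R} y_S^T \cdot |\delta(S) \cap P|.
\end{align*}
To finish, I would invoke Lemma~\ref{lem:geq2} to conclude $|\delta(S) \cap P| \leq 2$ for every $S$ that contributes to the sum, and Observation~\ref{obs:sur} to conclude that $y_S^T > 0$ forces $\Sur(S) = 1$. Thus every nonzero term satisfies $y_S^T \cdot |\delta(S) \cap P| \leq 2\, y_S^T \cdot \Sur(S)$, and summing over $S$ gives the desired $\optcost((u,v)) \leq 2 \sum_{S \subseteq R} y_S^T \cdot \Sur(S)$.

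The main obstacle I anticipate is justifying the application of Lemma~\ref{lem:geq2} to sets $S$ that were deactivated before termination. Since $y_S^T > 0$ forces $S$ to have been active at some time (dual variables are only grown for active sets), at that time Lemma~\ref{lem:spanningtree} guarantees that the marked edges inside $S$ form a spanning tree of $S$. After $S$ is deactivated, all of $S$ lies inside a single active set, so any further marked edge cannot have both endpoints in $S$; hence the marked edges interior to $S$ continue to span $S$ upon termination. Combined with Lemma~\ref{lem:cut} and the fact that marked edges overall form a forest, the unique-path property of trees then constrains $P$ to enter and leave $S$ at most twice, exactly as needed to apply Lemma~\ref{lem:geq2}.
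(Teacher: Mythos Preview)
Your proof is correct and follows essentially the same route as the paper's: triangle inequality along $P$, tightness of the marked-edge constraints, interchange of summation, then Observation~\ref{obs:sur} and Lemma~\ref{lem:geq2} to obtain the factor~$2$. Your final paragraph justifying Lemma~\ref{lem:geq2} for sets that were deactivated before termination is in fact more careful than the paper, which applies the lemma (stated only for active sets) to all $S$ in the sum without further comment.
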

\begin{proof}
By triangle inequality, we can upper bound $\optcost\big((u,v)\big)$ by cost of the edges on the path $P$ between $u$ and $v$. 
	\begin{align*} 	
		\optcost\big((u,v)\big) &\leq \sum_{e \in P} \optcost(e)\\
		&=\sum_{e \in P} \sum_{S:e \in \delta(S)} y_S^T \\ 
		&=\sum_S |\delta(S) \cap P| \cdot y_S^T \\
		&\leq \sum_S |\delta(S) \cap P| \cdot y_S^T \cdot \Sur(S) \qquad \text{\emph{(By Observation~\ref{obs:sur}.)}}\\
		&\leq 2 \cdot \sum_S \cdot y_S^T \cdot \Sur(S) \qquad \text{\emph{(By Lemma~\ref{lem:geq2}.)}}.
	\end{align*}
 The first equality follows from the fact that marked edges correspond to tight dual constraints.
\end{proof}

We have bounded $\optcost\big((u,v)\big)$ in terms of the value of the dual solution. It remains to bound $f(t_{(u,v)}-\atime(v))$. To this end, we introduce the following Lemma. We assume without loss of generality that $\atime(v) \geq \atime(u)$.
\begin{lemma}
	\label{lem:add_delay}
	Let $(u,v)\in M$ and let $t_{(u,v)}$ be the time at which $u$ and $v$ are matched. $f(t_{(u,v)} -\atime(v))\leq \sum_{S \subseteq R} y_S^T \cdot \Sur(S)$.
\end{lemma}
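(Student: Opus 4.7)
The plan is to apply Observation~\ref{obs:tight} to the active set $A(v)$ immediately before the tight-constraint merger at time $t_{(u,v)}$. At that moment $v$ is unmatched and $A(v)$ is still disjoint from $A(u)$, so $A(v)$ consists of $v$ together with matched pairs; in particular $A(v)$ has odd cardinality and is growing. Observation~\ref{obs:tight} then produces some $x \in A(v)$ with $f(t_{(u,v)} - \atime(x)) = \sum_{S' \ni x} y_{S'}^{t_{(u,v)}}$. If such an $x$ can be chosen with $\atime(x) \leq \atime(v)$, then using monotonicity of $f$, the fact that dual variables are non-decreasing in time, and the fact that $y_{S'}^T > 0$ forces $\Sur(S') = 1$ (since $y_{S'}$ is grown only while $S'$ is active \emph{and} has odd cardinality), one obtains
\[
f(t_{(u,v)} - \atime(v)) \;\leq\; f(t_{(u,v)} - \atime(x)) \;=\; \sum_{S' \ni x} y_{S'}^{t_{(u,v)}} \;\leq\; \sum_{S' \subseteq R} y_{S'}^T \cdot \Sur(S'),
\]
which is the lemma.

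To secure $\atime(x) \leq \atime(v)$ I take $x$ to be the earliest-arrived request in $A(v)$, which satisfies the bound automatically since $v \in A(v)$. This choice meets the equality in Observation~\ref{obs:tight} because of the growth rule: each iteration, $y_{A(v)}$ is increased by $\min_{x' \in A(v)} \reqGrowth(x')$, so the minimiser has zero slack after the growth step. Concavity of $f$ makes $f'$ non-increasing, so among requests in $A(v)$ the earliest-arrived has the largest $t - \atime(x')$ and hence the smallest marginal $f'(t-\atime(x'))$, making it the persistent min-$\reqGrowth$ achiever both within a growth step and across merger events. An inductive check shows that the earliest-arrived request in the current active set containing $v$ continually has zero slack, so at time $t_{(u,v)}$ it satisfies the required equality.

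The main obstacle is the bookkeeping needed to verify the earliest-has-zero-slack invariant across merger events, where the fused active set inherits the two pre-merger earliest achievers and one must argue that the overall earliest among them keeps its slack at zero going forward. If this becomes delicate, a self-contained fallback is a telescoping recursion using only Observation~\ref{obs:tight}: if the returned $x_1$ satisfies $\atime(x_1) > \atime(v)$, apply subadditivity of $f$ to split $f(t_{(u,v)} - \atime(v)) \leq f(t_{(u,v)} - \atime(x_1)) + f(\atime(x_1) - \atime(v))$, bound the first summand via Observation~\ref{obs:tight}, and recurse on the second summand at time $\atime(x_1)^-$ using the active set containing $v$ at that moment (which excludes $x_1$ since $x_1$ had not yet arrived). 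The recursion terminates because the chosen arrival times strictly decrease, and the level-$i$ sets are disjoint from level-$j$ sets since any set contributing at level $i$ must be active by $\atime(x_{i-1})^-$ and therefore cannot contain any $x_j$ with $j < i$; summing still yields the bound $\sum_{S'} y_{S'}^T \Sur(S')$.
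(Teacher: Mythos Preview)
Your primary argument has a real gap. The claim that the earliest-arrived request in $A(v)$ always has zero slack fails across merges: when $A(v)$ absorbs an even-cardinality set $S'$ containing some $w$ with $\atime(w)$ smaller than anything previously in $A(v)$, this $w$ becomes the new earliest, but it has been sitting in a \emph{non-growing} set and has accumulated strictly positive slack. Your marginal-$f'$ reasoning compares only the one-step \emph{increments} of slack, not the accumulated slack, so it cannot rule this out; the $\min$-$\reqGrowth$ achiever after such a merge is still a later-arrived element of the old $A(v)$, not $w$. The easy repair is to prove the weaker invariant that \emph{some} zero-slack element of $A(v)$ has arrival time at most $\atime(v)$: since $v$ is unmatched, $A(v)$ is odd and growing throughout $[\atime(v),t_{(u,v)}]$, starts with $v$ itself as the zero-slack element, and under a growth step the zero-slack element can only move to an \emph{earlier}-arrived request (concavity gives the later-arrived requests the larger increment); merges with even sets add only nonnegative-slack requests and so preserve the invariant.

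Your fallback telescoping argument is correct as stated (the disjointness of the level-$i$ families holds for exactly the reason you give), and once the invariant above is in place the recursion in fact never fires. Either route yields the constant-$1$ bound in the lemma.

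This is a genuinely different route from the paper's. The paper applies Observation~\ref{obs:tight} once to obtain some $x \in A(v)$, makes no attempt to control $\atime(x)$ relative to $\atime(v)$, and instead uses subadditivity to split $f(t_{(u,v)}-\atime(v)) \leq f(|\atime(v)-\atime(x)|) + f(t_{(u,v)}-\atime(x))$. The second summand is the dual sum over sets containing $x$; the first is bounded by $\optcost(v,x)$ and then by $2\sum_S y_S^T\,\Sur(S)$ via Lemma~\ref{lem:optcost}, using the marked-edge path from $v$ to $x$ inside $A(v)$. This gives a factor of $3$ rather than $1$ (so the paper's proof actually establishes a slightly weaker inequality than the lemma statement). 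Your approach avoids the detour through Lemma~\ref{lem:optcost} and recovers the sharper constant, at the price of needing the structural observation about where the zero-slack witness sits.
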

\begin{proof}

From Observation~\ref{obs:tight} we know that $A(v)$ contains at least one request $x$, for which it holds that, at time $t_{(u,v)}$, $\sum_{S:x \in S} y_S = f(t_{(u,v)}-\atime(x))$. Since $f$ is a concave function,
\begin{align*} 
	f(|t_{(u,v)}-\atime(v)|) \leq f(|\atime(v) - \atime(x)|) + f(|t_{(u,v)}-\atime(x))|).
\end{align*}
Since $v, x \in A(v)$, we know there exists a path of marked edges, which correspond to tight dual constraints, between $v$ and $x$. Therefore, by Lemma~\ref{lem:optcost}, we can upper bound $\optcost((v,x))$ by $2 \cdot \sum_{S \subseteq V} y_S^{T} \cdot \Sur(S)$. We use this to upper bound $f(|\atime(v) - \atime(x)|)$ as follows:
\begin{align*} 
	f(|\atime(v) - \atime(x)|) &\leq f(|\atime(v) - \atime(x)|) + \dist(\pos(v),\pos(x)) \\
	&= \optcost((v,x))\\
	&\leq 2 \cdot \sum_{S \subseteq V} y_S^{T} \cdot \Sur(S)
\end{align*}
Recall that $x$ is the request for which it holds that, at any time $t$, $\sum_{S:x \in S} y_S^{t} = f(t-\atime(x))$. By observation~\ref{obs:sur} we know that $\sum_{S \subseteq V} y_S^{t} \cdot \Sur(S) \geq f(t - \atime(x)) $. It follows that
\begin{align*} 
		f(t_{(u,v)}-\atime(v)) &\leq 2 \cdot \sum_{S \subseteq V} y_S^{t_{(u,v)}} \cdot \Sur(S) + \cdot \sum_{S \subseteq V} y_S^{t_{(u,v)}} \cdot \Sur(S) \\
		&= 3 \cdot \sum_{S \subseteq V} y_S^{t_{(u,v)}} \cdot \Sur(S)
\end{align*}
\end{proof} 
Recall that $D_{value}$ is the value of the dual solution, and let $OPT$ be the cost of the optimal IP solution. From Lemmas~\ref{lem:optcost} and~\ref{lem:add_delay} we can now deduce 
\begin{align*} 
	\cost(M) &\leq 8 \cdot \sum_{(u,v) \in M}\sum_{S \subseteq V} y_S^{T} \cdot \Sur(S)\\
    &= 8 \cdot \sum_{(u,v) \in M}D_{value}\\
    &\leq 8 \cdot \sum_{(u,v) \in M}OPT\\
    &= 8m \cdot OPT
\end{align*}
, which completes the proof of Theorem~\ref{thm:concave}.
 
\bibliographystyle{plainurl}
\bibliography{main.bib}
\end{document}